\def\eqref#1{equation~(\ref{#1})}
\def\Eqref#1{Equation~(\ref{#1})}
\def\1{\bm{1}}
\DeclareMathAlphabet{\mathsfit}{\encodingdefault}{\sfdefault}{m}{sl}
\SetMathAlphabet{\mathsfit}{bold}{\encodingdefault}{\sfdefault}{bx}{n}
\newcommand{\ellone}{$\mathcal{L}_1$ }
\newcommand{\lmbrl}{$\mathcal{L}_1$-MBRL }
\newcommand{\xh}{\hat{x}}
\newcommand{\ua}{u_a}
\newcommand{\sm}{\sigma_m}
\newcommand{\summ}{\sigma_{um}}
\newtheorem*{theorem*}{Theorem}
\newtheorem{remark}{Remark}
\newtheorem{theorem}{Theorem}
\newtheorem{assumption}{Assumption}
\definecolor{applegreen}{rgb}{0.0, 0.47, 0.44}
\newcommand{\norm}[1]{\| #1 \|}
\newcommand{\xs}{x^\star}
\newcommand{\us}{u^\star}
\newcommand{\smh}{\hat{\sigma}_m}
\newcommand{\summh}{\hat{\sigma}_{um}}
\title{Robust  Model Based Reinforcement Learning \centering Using \ellone Adaptive Control}
\author{Minjun Sung\thanks{The authors equally contributed to this work.},~~Sambhu  H.~Karumanchi\footnotemark[1],~~Aditya Gahlawat,~~Naira Hovakimyan\\
Department of Mechanical Science \& Engineering\\
University of Illinois Urbana-Champaign\\
Urbana, IL, 61801, USA \\
\texttt{\{mjsung2,shk9,gahlawat,nhovakim\}@illinois.edu}
}
\begin{document}

\maketitle
\begin{abstract}
We introduce $\mathcal{L}_1$-MBRL, a control-theoretic augmentation scheme for Model-Based Reinforcement Learning (MBRL) algorithms. Unlike model-free approaches, MBRL algorithms learn a model of the transition function using data and use it to design a control input. Our approach generates a series of approximate control-affine models of the learned transition function according to the proposed \textit{switching law}. Using the approximate model, control input produced by the underlying MBRL is perturbed by the \ellone adaptive control, which is designed to enhance the robustness of the system against uncertainties. Importantly, this approach is agnostic to the choice of MBRL algorithm,  enabling the use of the scheme with various MBRL algorithms. MBRL algorithms with \ellone augmentation exhibit enhanced performance and sample efficiency across multiple MuJoCo environments, outperforming the original MBRL algorithms, both with and without system noise.
\end{abstract}

\section{Introduction}\label{sec:intro}

Reinforcement learning (RL) combines stochastic optimal control with data-driven learning. Recent progress in deep neural networks (NNs) has enabled RL algorithms to make decisions in complex and dynamic environments~\citep{wang2022deep}. Reinforcement learning algorithms have achieved remarkable performance in a wide range of applications, including robotics~\citep{ibarz2021train,nguyen2019review}, natural language processing~\citep{wang2018video,wu2018study}, autonomous driving~\citep{milz2018visual,li2020deep}, and computer vision~\citep{yun2017action,wu2017learning}.

There are two main approaches to reinforcement learning: Model-Free RL (MFRL) and MBRL. MFRL algorithms directly learn a policy to maximize cumulative reward from data, while MBRL algorithms first learn a model of the transition function and then use it to obtain optimal policies~\citep{moerland2023model}. While MFRL algorithms have demonstrated impressive asymptotic performance, they often suffer from poor sample complexity~\citep{mnih2015human,lillicrap2016continuous,schulman2017proximal}. On the other hand, MBRL algorithms offer superior sample complexity and are agnostic to tasks or rewards~\citep{kocijan2004gaussian,deisenroth2013gaussian}. While MBRL algorithms have traditionally lagged behind MFRL algorithms in terms of asymptotic performance, recent approaches, such as the one presented by \citet{chua2018deep}, aim to bridge this gap.

In MBRL, learning a model of the transition function can introduce model (or epistemic) uncertainties due to the lack of sufficient data or data with insufficient information. Moreover, real-world systems are also subject to inherently random aleatoric uncertainties. As a result, unless sufficient data—both in quantity and quality—is available, the learned policies will exhibit a gap between expected and actual performance, commonly referred to as the sim-to-real gap~\citep{zhao2020sim}.

The field of robust and adaptive control theory has a rich history and was born out of a need to design a controller to address the uncertainties discussed above. Given that both MBRL algorithms and classical control tools depend on models of the transition function, it is natural to consider exploring the consolidation of robust and adaptive control with MBRL. However, such a consolidation is far from straightforward, primarily due to the difference between the class of models for which the robustness is considered. To analyze systems and design controllers for such systems, conventional control methods often assume extensively modeled dynamics that are gain scheduled, linear, control affine, and/or true up to parametric uncertainties~\citep{neal2004design,nichols1993gain}. On the other hand, MBRL algorithms frequently update highly nonlinear models (e.g. NNs) to enhance their predictive accuracy. The combination of this iterative updating and the model’s high nonlinearity creates a unique challenge in embedding robust and adaptive controllers within MBRL algorithms.

\subsection{Statement of contributions} We propose the \lmbrl framework as an add-on scheme to augment MBRL algorithms, which offers improved robustness against epistemic and aleatoric uncertainties. We \emph{affinize} the learned model in the control space according to the \emph{switching law} to construct a control-affine model based on which the \ellone control input is designed. The \emph{switching law} design provides a distinct capability to explicitly control the predictive performance bound of the \emph{state predictor} within the \ellone adaptive control architecture while harnessing the robustness advantages offered by the \ellone adaptive control. The \ellone add-on does not require any modifications to the underlying MBRL algorithm, making it agnostic to the choice of the baseline MBRL algorithm. To evaluate the effectiveness of the \lmbrl scheme, we conduct extensive numerical simulations using two baseline MBRL algorithms across multiple environments, including scenarios with action or observation noise. The results unequivocally demonstrate that the \lmbrl scheme enhances the performance of the underlying MBRL algorithms without any redesign or retuning of the \ellone controller from one scenario to another.

\subsection{Related Work} 
{\bf Control Augmentation of RL policies} is of significant relevance to our research. Notable recent studies in this area, including~\citet{cheng2022improving} and~\citet{arevalo2021model}, have investigated the augmentation of adaptive controllers to policies learned through MFRL algorithms. However, these approaches are limited by their assumption of known nominal models and their restriction to control-affine or nonlinear models with known basis functions, which restricts their applicability to specific system types. In contrast, our approach does not assume any specific structure or knowledge of the nominal dynamics. We instead provide a general framework to augment an \ellone  adaptive controller to the learned policy, while simultaneously learning the transition function.

{\bf Robust and adversarial RL methods} aim to enhance the robustness of RL policies by utilizing minimax optimization with adversarial perturbation, as seen in various studies~\citep{tobin2017domain,peng2018sim,loquercio2019deep,pinto2017robust}. However, existing methods often involve modifications to data or dynamics in order to handle worst-case scenarios, leading to poor general performance. In contrast, our method offers a distinct advantage by enhancing robustness without perturbing the underlying MBRL algorithm. This allows us to improve the robustness of the baseline algorithm without sacrificing its general performance.

{\bf Meta-(MB)RL} methods train models across multiple tasks to facilitate rapid adaptation to dynamic variations as proposed in~\citep{finn2017model,nagabandilearning,nagabandideep}. These approaches employ hierarchical latent variable models to handle non-stationary environments. However, they lack explicit provisions for uncertainty estimation or rejection, which can result in significant performance degradation when faced with uncertain conditions~\citep{chen2021context}. In contrast, the \lmbrl framework is purposefully designed to address this limitation through uncertainty estimation and explicit rejection.
Importantly, our \lmbrl method offers the potential for effective integration with meta-RL approaches, allowing for the joint leveraging of both methods to achieve both environment adaptation and uncertainty rejection in a non-stationary and noisy environment.

\section{Preliminaries}\label{sec:prelim}

In this section we provide a brief overview of the two main components of our \lmbrl framework: MBRL and \ellone adaptive control methodology.
\subsection{Model Based Reinforcement Learning}\label{subsec:MBRL_prelim}

This paper assumes a discrete-time finite-horizon Markov Decision Process (MDP), defined by the tuple $\mathcal{M} = (\mathcal{X},\mathcal{U}, f, r, \rho_0,\gamma, H)$. Here $\mathcal{X}\subset \mathbb{R}^n$ is the compact state space, $\mathcal{U}\subset \mathbb{R}^m$ is the compact action space, $f:\mathcal{X}\times \mathcal{U} \rightarrow \mathcal{X} $ is the deterministic transition function, $r:\mathcal{X}\times \mathcal{A}\rightarrow \mathbb{R}$ is a bounded reward function. Let $\xi(\mathcal{X})$ be the set of probability distributions over $\mathcal{X}$ and  $\rho_0 \in \xi(\mathcal{X})$ be the initial state distribution. $\gamma$ is the discount factor and $H \in \mathbb{N}$ is a known horizon of the problem. For any time step $t< H$, if $x_t \notin \mathcal{X}$ or $u_t \notin \mathcal{U}$, then the episode is considered to have failed such that $r(x_{t'},u_{t'}) = 0$ for all $t'= t, t+1, \ldots, H$. A policy is denoted by $\pi$ and is specified as $\pi := [\pi_1, \ldots, \pi_{H-1}]$, where  $\pi_t: \mathcal{X} \rightarrow \xi (\mathcal{A})$ and $\xi (\mathcal{A})$ is the set of probability distributions over $\mathcal{A}$. The goal of RL is to find a policy that maximizes the expected sum of the reward along a trajectory $\tau := (x_0,u_0,\cdots,x_{H-1},u_{H-1},x_H)$, or formally, to maximize $J(\pi) = \mathbb{E}_{x_0 \sim \rho_0, u_t\sim \pi_t}[\sum_{t=1}^H \gamma^t r(x_t,u_t)]$, where $x_{t+1}-x_t = f(x_t,u_t)$~\citep{nagabandi2018neural}. The trained model can be utilized in various ways to obtain the policy, as detailed in Sec.~\ref{MBRL}.

\subsection{\ellone Adaptive Control}\label{subsec:L1-prelim}


The \ellone adaptive control theory provides a framework to counter the uncertainties with guaranteed transient and steady-state performance, alongside robustness margins~\citep{hovakimyan2010L1}. The performance and reliability of the \ellone adaptive control has been extensively tested on systems including robotic platforms~\citep{cheng2022improving,pravitra2020,wu20221}, NASA AirSTAR sub-scale aircraft~\citep{gregory2009l1,gregory2010flight}, and Learjet~\citep{ackerman2017evaluation}. While we give a brief description of the \ellone adaptive control in this subsection, we refer the interested reader to Appendix~\ref{sec:ellone-extended} for detailed explanation. 

Assume that the continuous-time dynamics of a system can be represented as
\begin{equation}\label{eqn:true_continuous}
    \dot{x}(t) = g(x(t)) + h(x(t))u(t) + d(t,x(t), u(t)), \quad x(0) = x_0,
\end{equation}
where $x(t) \in \mathbb{R}^n$ is the system state, $u(t) \in \mathbb{R}^m$ is the control input, $g:\mathbb{R}^n \rightarrow \mathbb{R}^n$ and $h:\mathbb{R}^n \rightarrow \mathbb{R}^{n \times m}$ are known nonlinear functions, and $d(t,x(t), u(t)) \in \mathbb{R}^n$ represents the unknown residual containing both the model uncertainties and the disturbances affecting the system. 

Consider a desired control input $\us(t) \in \mathbb{R}^m$ and the induced desired state trajectory $\xs(t) \in \mathbb{R}^n$ based on the nominal (uncertainty-free) dynamics  
\begin{equation}\label{eqn:nominal_continuous}
\dot{x}^\star(t) = g(\xs(t))+h(\xs(t))\us(t), \quad \xs(0) = x_0.
\end{equation}
If we directly apply $\us(t)$ to the true system in~\Eqref{eqn:true_continuous}, the presence of the uncertainty $d(t, x(t), u(t))$ can cause the actual trajectory to diverge unquantifiably from the nominal trajectory. 
The \ellone adaptive controller computes an additive control input $u_a(t)$ to ensure that the augmented input $u(t) = \us(t) + u_a(t)$ keeps the actual trajectory $x(t)$ bounded around the nominal trajectory $\xs(t)$ in a quantifiable and uniform manner.

The \ellone adaptive controller has three components: the state predictor, the adaptive law, and a low-pass filter. The state predictor is given by
\begin{equation}\label{eqn:predictor_continuous}
    \dot{\xh}(t) = g(x(t))+h(x(t))(\us(t) + \ua(t)) + \hat{\sigma}(t) + A_s \Tilde{x}(t),
\end{equation}
with the initial condition $\hat{x}(0)= \hat{x}_0$, where $\xh(t) \in \mathbb{R}^n$ is the state of the predictor, $\hat{\sigma}(t)$ is the estimate of $d(t,x(t), u(t))$,$ \:\Tilde{x}(t) = \xh (t)- x(t)$ is the state prediction error, and $A_s\in \mathbb{R}^{n\times n}$ is a Hurwitz matrix chosen by the user. Furthermore, $\hat{\sigma}(t)$ can be decomposed as 
\begin{equation}\label{eqn:d(t,x(t))}
  \hat{\sigma}(t)   = h(x(t))\hat{\sigma}_m(t)   + h^\perp(x(t))\hat{\sigma}_{um}(t),    
\end{equation}
where $\hat{\sigma}_m(t)$ and $\hat{\sigma}_{um}(t)$ are the estimates of the \emph{matched} and \emph{unmatched} uncertainties. Here, $h^\perp(x(t)) \in \mathbb{R}^{n \times (n-m)}$ is a matrix satisfying $ h(x(t))^\top h^\perp(x(t)) = 0$ and $\text{rank}\left(\begin{bmatrix}h(x(t)),~h^\perp(x(t))  \end{bmatrix} \right)= n$. The existence of $h^\perp(x(t))$ is guaranteed, given $h(x(t))$ is a full-rank matrix. The role of the predictor is to produce the state estimate $\xh(t)$ induced by the uncertainty estimate $\hat{\sigma}(t)$. 

The uncertainty estimate is updated using the piecewise constant adaptive law given by
\begin{equation}\label{eqn:adaptive_law_continuous}
    \hat{\sigma}(t) =\hat{\sigma}(iT_s) = -\Phi^{-1}(T_s)\mu(iT_s),~\quad t \in [iT_s,(i+1)T_s),~\quad i \in \mathbb{N},~\quad \hat{\sigma}(0) = \hat{\sigma}_0,
\end{equation}
where $\Phi(T_s) = A_s^{-1} \left(\text{exp}(A_s T_s) - \mathbb{I}_n\right)$, $\mu(iT_s) = \text{exp}(A_s T_s) \tilde{x}(iT_s)$, and $T_s$ is the sampling time.

Finally, the control input is given by
\begin{equation}\label{eqn:control_law_continuous}
    u(t) = \us(t) + \ua(t), \quad u_a(s) = -C(s)\mathfrak{L}\left[\hat{\sigma}_m(t) \right],
\end{equation}
where $\mathfrak{L}[\cdot]$ denotes the Laplace transform, and the \ellone input $\ua(t)$ is the output of the low-pass filter $C(s)$ in response to the estimate of the matched component $\hat{\sigma}_m(t)$. The bandwidth of the low-pass filter is chosen to satisfy the small-gain stability condition~\citep{wang2011}. 

\section{The \lmbrl Algorithm}\label{sec:l1mbrl}

In this section, we present the \lmbrl algorithm, illustrated in Fig.~\ref{fig:Framework}. We first explain a standard MBRL algorithm and describe our method to integrate \ellone adaptive control with it. 

\begin{wrapfigure}{r}{0.5\textwidth}
\vspace{-20pt}
\begin{center}
    \centering
    \includegraphics[width = 0.5\textwidth]{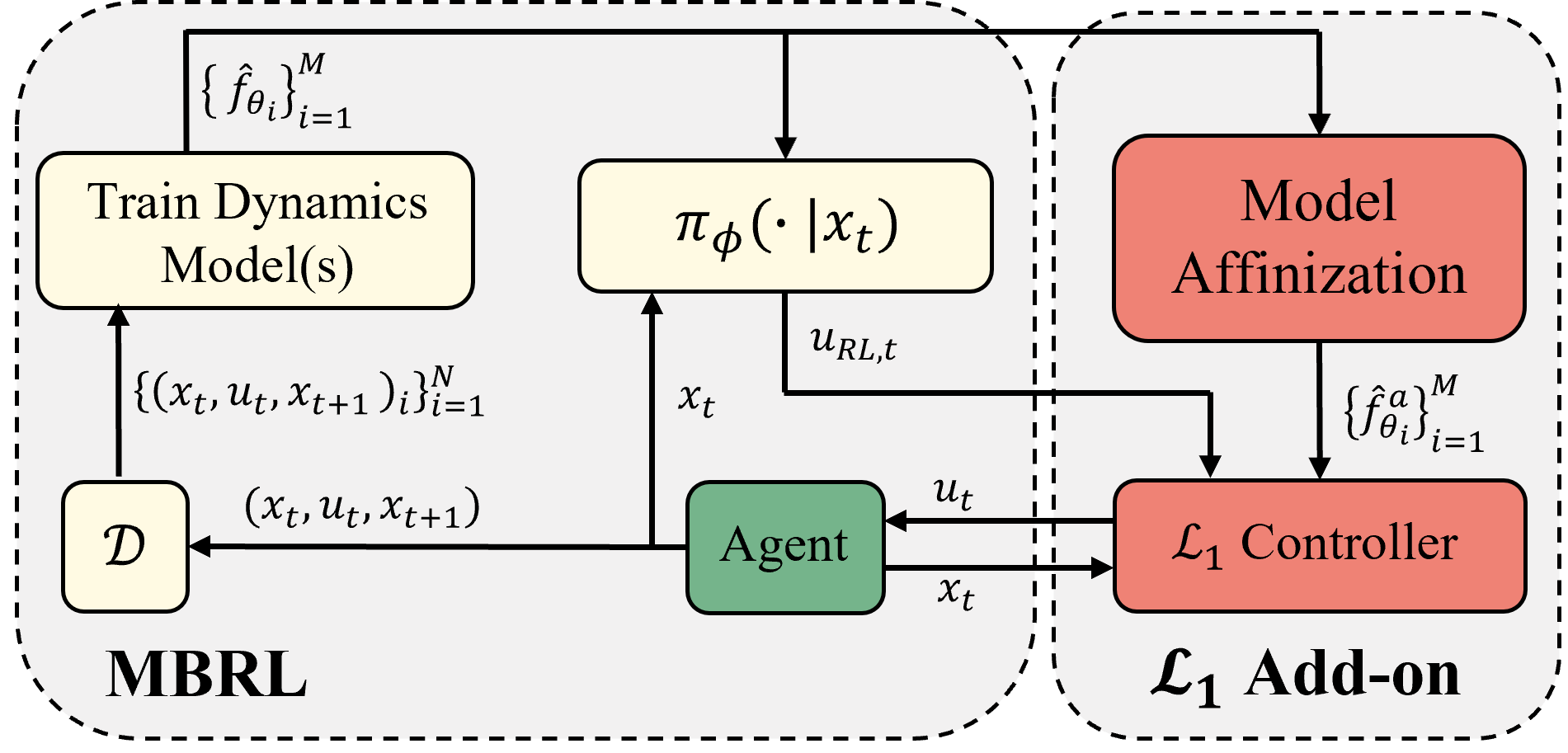}
    \vspace{-5pt}
    \caption{\lmbrl Framework. The policy box $\pi_\phi(\cdot|x_t)$ includes policy update and control input sampling for each time step. Although this figure illustrates an on-policy MBRL setting with a parameterized $\pi_\phi$ to provide a simple visualization, the framework is not limited to such class and can also be applied to off-policy algorithms or without a parameterized policy.}
    \label{fig:Framework}
\end{center}
\vspace{-25pt}
\end{wrapfigure}
\subsection{Standard MBRL}\label{MBRL}
As our work aims to develop an add-on module that enhances the robustness of an existing MBRL algorithm, we provide a high-level overview of a standard MBRL algorithm and its popular complementary techniques. 

The standard structure of MBRL algorithms involves the following steps: data collection, model updating, and policy updating using the updated model. To reduce model bias, many recent results consider an ensemble of models $\{{\hat{f}}_{\theta_i}\}_{i=1,2,\cdots, M}$, $M \in \mathbb{N}$, on the data set $\mathcal{D}$. The ensemble model $\hat{f}_{\theta_i}$ is trained by minimizing the following loss function for each $\theta_i$~\citep{nagabandi2018neural}: 
\begin{equation}\label{eq:l2_minimization}
\frac{1}{|\mathcal{D}|} \sum_{(x_t,u_t,x_{t+1}) \in D} \| (x_{t+1}-x_t) - \hat{f}_{\theta_i}(x_t, u_t)\|_2^2. 
\end{equation}
In this paper, we consider (\ref{eq:l2_minimization}) as the loss function used to train the baseline MBRL algorithm among other possibilities. This is only for convenience in explaining the \ellone augmentation in Section~\ref{subsec:l1_augmentation}, and appropriate adjustments can be readily made for the augmentation upon the choice of different loss functions.

Besides the loss function, methods like random initialization of parameters, varying model architectures, and mini-batch shuffling are widely used to reduce the correlation among the outputs of different models in the ensemble. Further, various standard techniques including early stopping, input/output normalization, and weight normalization can be used to avoid overfitting. 

Once the model is learned, control input can be computed by any of the following options: 1) using the learned dynamics as a simulator to generate fictitious samples~\citep{kurutach2018modelensemble,clavera2018model}, 2) leveraging the derivative of the model for policy search~\citep{levine2013guided,heess2015learning}, or 3) applying the Model Predictive Controller (MPC)~\citep{nagabandi2018neural,chua2018deep}. We highlight here that our proposed method is agnostic to the use of particular techniques or the choice of the policy optimizer.  
\subsection{\ellone Augmentation}\label{subsec:l1_augmentation}
Let the true dynamics in discrete time be given by
\begin{equation}\label{eq:gt_dynamics}
    \Delta x_{t+1} =  f(x_t,u_t)+w(t,x_t, u_t), \quad \Delta x_{t+1}  \triangleq x_{t+1}-x_t,
\end{equation}
where the transition function $f$ and the system uncertainty $w$ are unknown. Let  $\hat{f}_\theta:= \frac{1}{M} \sum_{i=1}^M f_{\theta_i}$ be the mean of the ensemble model trained using the loss function in~\Eqref{eq:l2_minimization}. Then, we express  
\begin{equation}\label{eq:MBRL}
    \Delta \bar{x}_{t+1} =  \hat{f}_\theta(x_t,u_t), \quad \Delta \bar{x}_{t+1} \triangleq \bar{x}_{t+1}-x_t,
\end{equation}
where $\bar{x}_{t+1}$ indicates the estimate of the next state evaluated with $\hat{f}_\theta(x_t,u_t)$. In MBRL, such transition functions are typically modeled using fully nonlinear function approximators like NNs. However, as discussed in Sec.~\ref{subsec:L1-prelim}, it is necessary to represent the nominal model in the control-affine form to apply \ellone adaptive control. A common approach to obtain a control-affine model involves restricting the model structure to the control-affine class~\citep{khojasteh2020probabilistic,taylor2019episodic,choi2020reinforcement}. For NN models, this process involves training two NNs $g_\theta$ and $h_\theta$, such that~\Eqref{eq:MBRL} becomes $\Delta \bar{x}_{t+1} =  g_\theta(x_t) + h_\theta(x_t)u_t$.

While control-affine models are used for their tractability and direct connection to control-theoretic methods, they are inherently limited in their representational power compared to fully nonlinear models, and hence, their use in an MBRL algorithm can result in reduced performance.
\begin{wrapfigure}{r}{0.4\textwidth}
  \vspace{-15pt}
  \begin{center}
    \includegraphics[width =0.4\textwidth]{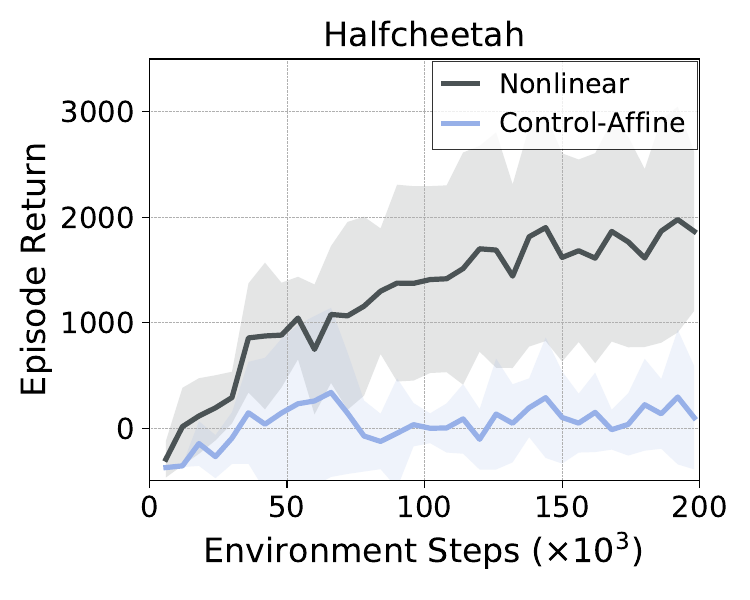}
    \vspace{-10pt}
    \caption{Comparison of performance between fully nonlinear and control-affine model on the Halfcheetah environment using METRPO.The control-affine model failed to learn the Halfcheetah dynamics.}
    \label{fig:ca_vs_nl}
  \end{center}
  \vspace{-20pt}
\end{wrapfigure}
To study the level of compromise on the performance, we compare fully nonlinear models with control-affine models in the Halfcheetah environment for METRPO~ \citep{kurutach2018modelensemble}, where each size of the implicit layers of the control-affine model $g_\theta$ and $h_\theta$ are chosen to match that of the fully nonlinear $\hat{f}_\theta$. The degraded performance of the control-affine model depicted in Fig.~\ref{fig:ca_vs_nl} can be primarily attributed to intricate nonlinearities in the environment. 

Although using the above naive control-affine model can be convenient, it must trade in the capabilities of the underlying MBRL algorithm. To avoid such limitations, we adopt an alternative approach inspired by the Guided Policy Search~\citep{levine2013guided}.  Specifically, we apply a control-affine transformation to the fully nonlinear dynamics multiple times according to the predefined \emph{switching law}. Specifically, we apply the first-order Taylor series approximation around the operating input $\bar{u}$:
\begin{equation}\label{eq:affinzation}
\begin{aligned}
    \hat{f}_{\theta}&(x_t,u_t) \approx \hat{f}_\theta(x_t,\Bar{u}) + \left(\left[\nabla_u\hat{f}_\theta(x_t,u)\right]_{u=\Bar{u}}\right)^\top (u_t-\bar{u})\\
    &=\underbrace{\hat{f}_\theta(x_t,\Bar{u})- \left(\left[\nabla_u\hat{f}_\theta(x_t,u)\right]_{u=\Bar{u}}\right)^\top \Bar{u}}_{\triangleq~g_\theta(x_t)} + \underbrace{\left(\left[\nabla_u\hat{f}_\theta(x_t,u)\right]_{u=\Bar{u}}\right)^\top}_{\triangleq~h_\theta(x_t)} u_t \triangleq  \hat{f}^a_{\theta}(x_t,u_t;\bar{u}).
\end{aligned}    
\end{equation}
Here, the superscript $a$ indicates the \textit{affine} approximation of $\hat{f}_\theta$. The term \textit{affinization} in this paper is distinguished from \textit{linearization}, which linearizes the function with respect to both $x_t$ and $u_t$ such that $\bar{x}_{t+1} \simeq Ax_t + Bu_t$ for some constant matrix $A$ and $B$. Since it is common to have more states than control inputs in most controlled systems, the affinized model is a significantly more accurate approximation of the nominal dynamics compared to the linearized model. 

Indeed, the control-affine model $\hat{f}_\theta^a$ is only a good approximation of $\hat{f}_\theta$ around $\bar{u}$. When the control input deviates considerably from $\bar{u}$, the quality of the approximation deteriorates. To handle this, we produce the next approximation model when the following \emph{switching} condition holds:
\begin{equation}\label{eq:switching}
 \|\hat{f}^a_\theta(x_t,u_t;\bar{u})- \hat{f}_{\theta}(x_t,u_t)\| \geq \epsilon_a.
\end{equation}
Here, $\|\cdot\|$ indicates the vector norm, and $\epsilon_a$ is the model tolerance hyperparameter chosen by the user. Note that as $\epsilon_a \rightarrow 0$, we make an affine approximation at every point in the input space and we retrieve the original non-affine function $\hat{f}_\theta$.
\begin{remark}
    Although a more intuitive choice for the switching condition would be $\|u_t-\bar{u}\|>\epsilon_a$, we adopt an implicit switching condition (\Eqref{eq:switching}) to explicitly control over the acceptable level of prediction error between $\hat{f}^a_\theta$ and $\hat{f}_\theta$ by specifying the threshold $\epsilon_a$.  This approach prevents significant deviation in the performance of the underlying MBRL algorithm, and its utilization is instrumental in establishing the theoretical guarantees of the uncertainty estimation (See Section~\ref{sec:theoretical_analysis}).
\end{remark}

Given a locally valid control-affine model $\hat{f}_\theta^a$, we can proceed with the design of the \ellone input by utilizing the discrete-time version of the controller presented in Sec.~\ref{subsec:L1-prelim}. In particular, the state-predictor, the adaptation law, and the control law are given by
\begin{subequations}\label{eq:lmbrl_discrete}
\begin{align}
    \xh_{t+1} &= \xh_t + \Delta \xh_t \triangleq \xh_t + \hat{f}^a_\theta(x_t,u_t) + \hat{\sigma}_t +(A_s\Tilde{x}_t)\Delta t, \quad \xh_{0} = x_0, \: \tilde{x}_{t} = \xh_t - x_t, \label{eqn:predictor}\\
    \hat{\sigma}_{t} &= -\Phi^{-1}\mu_{t},\label{eqn:adaptation_law}\\
    q_t &= q_{t-1} + (-K q_{t-1}+K \hat{\sigma}_{m,t-1})\Delta t,\quad q_0=0,~\: u_{a,t} = - q_{t},\label{eqn:control_input}   
\end{align}
\end{subequations}
where $ K \succ 0 $ is an $m\times m$ symmetric matrix that characterizes the first order low pass filter $C(s)=K(s\mathbb{I}_m+K)^{-1}$, discretized in time. Note that~\Eqref{eqn:predictor_continuous}-\Eqref{eqn:control_law_continuous} can be considered as zero-order-hold continuous-time signals of discrete signals produced by ~\Eqref{eq:lmbrl_discrete}. As such, $\hat{\sigma}_t$ and $\mu_t$ are defined analogously to their counterparts in the continuous-time definitions. In our setting, where prior information about the desired input signal frequency is unavailable, an unbiased choice is to set $K=\omega \mathbb{I}_m$, where $\omega$ is the chosen cutoff frequency. The sampling time $T_s$ is set to $\Delta t$, which corresponds to the time interval at which the baseline MBRL algorithm operates. The algorithm for the \ellone adaptive control is presented in Algorithm~\ref{algo:L1}.

\begin{wrapfigure}{R}{0.5\textwidth}
\begin{minipage}{0.5\textwidth}
\vspace{-13pt}
\begin{algorithm}[H]
    \caption{\sc  $\mathcal{L}_1$ Adaptive Control}
    \small
    \SetAlgoLined
    \DontPrintSemicolon
    \KwData{Initialize $\{\hat{x}_t,~\hat{\sigma}_{t}\} \leftarrow 0 $ \\
    Set $\omega$ for $K$ in~\Eqref{eqn:control_input} \\
    }
        \SetKwFunction{FMain}{Control}
        \SetKwProg{Fn}{Function}{:}{}
        \Fn{\FMain{$u_{RL,t},x_t,\hat{f}^a_\theta$ }}{
            \BlankLine
            \vspace{-4pt}
            Prediction error update: $\Tilde{x}_t \longleftarrow  \xh_t - x_t $
            \BlankLine
            \vspace{-4pt}
            Uncertainty estimate $\hat{\sigma}_{t}$ update:~\Eqref{eqn:adaptation_law}
            \BlankLine
            \vspace{-4pt}
            Compute $ u_{a,t}~($\Eqref{eqn:control_input})
            \BlankLine
            \vspace{-4pt}
            $ u_{t} \leftarrow  u_{RL,t} + u_{a,t} $
            \BlankLine
            \vspace{-4pt}
            Update $\xh_{t+1}$~(\Eqref{eqn:predictor})
            \BlankLine
            \vspace{-4pt}
            \textbf{return} $ u_{t} $
            \vspace{4pt}
    }
    \textbf{End Function}
    \label{algo:L1}
\end{algorithm}
\vspace{-15pt}
\end{minipage}   
\end{wrapfigure}

As the underlying MBRL algorithm updates its model $\hat{f}_\theta$, the corresponding control-affine model $\hat{f}^a_\theta$ and \ellone control input $u_a$ are updated sequentially (Algorithm~\ref{algo:L1}). 
By incorporating the \ellone control augmentation into the MBRL algorithm, we obtain the \lmbrl algorithm, as outlined in Algorithm~\ref{algo:L1-MBRL}. Note that in this algorithm we are adding $u_{RL,t}$ instead of $u_t$ to the dataset. Intuitively, this is to learn the nominal dynamics that remains after uncertainties get compensated by $u_{a,t}$. Similar approach has been employed previously in~\cite[Appendix A.1]{Wang2020Exploring}. 

\begin{wrapfigure}{R}{0.5\textwidth}
\begin{minipage}{0.5\textwidth}
\vspace{-14pt}
\begin{algorithm}[H]
\caption{{\sc \lmbrl Algorithm}}
\small
\DontPrintSemicolon 
Set $\mathcal{D}\leftarrow {\emptyset}$, $\{\hat{x}_t,~\hat{\sigma}_{t}\} \leftarrow 0$\\
Initialize $\epsilon,\pi_\phi, \hat{f}_\theta, \omega,x_0$  \\
\Repeat{the average return converges\;}{
\For{Episodes $N_e \in \mathbb{N}$}{
$\bar{u} \leftarrow \texttt{None} $ \\
\For{Horizon $H \in \mathbb{N}$}{
$u_{RL,t} \sim \pi_\phi(\cdot|x_t)$ \\
\If{$\bar{u}$ is $\texttt{None}$ or~\Eqref{eq:switching}}{
\vspace{4pt}
$\bar{u}\leftarrow u_{RL,t}$\\
compute $\hat{f}_\theta^{a}$ via~\Eqref{eq:affinzation}}
$u_{t} \leftarrow \texttt{Control}$ in Algo.~\ref{algo:L1} \\
Execute $u_{t}$ and  $\mathcal{D} \leftarrow (x_t,u_{RL,t},x_{t+1})$ }}
\vspace{-4pt}
Update model(s) $\hat{f}_\theta$ using $\mathcal{D}$\\
Update policy $\pi_\phi$
}
\label{algo:L1-MBRL}
\end{algorithm}
\vspace{-10pt}
\end{minipage}   
\end{wrapfigure}

Our \lmbrl framework makes a control-affine approximation of the learned dynamics, which is itself an approximation of the ground truth dynamics. Such layers of approximations may amplify errors, which may degrade the effect of the \ellone augmentation. In this section, we prove that the \ellone prediction error is bounded, and subsequently, the \ellone controller effectively compensates for uncertainties. To this end, we conduct a continuous-time analysis of the system that is controlled via \lmbrl framework which operates in sampled-time~\citep{aastrom2013computer}. It is important to note that our adaptation law (\Eqref{eqn:adaptive_law_continuous}) holds the estimation for each time interval (zero-order-hold), converting discrete estimates obtained from the MBRL algorithm into a continuous-time signal. Such choice of the adaptation law ensures that the \ellone augmentation is compatible with the discrete MBRL setup, providing the basis for the following analysis.

\subsection{Theoretical analysis}\label{sec:theoretical_analysis}

Consider the nonlinear (unknown) continuous-time counterpart of~\Eqref{eq:gt_dynamics}
\begin{equation}\label{eq:ground_truth_nonlinear}
    \dot{x}(t) = F(x(t),u(t)) + W(t,x(t), u(t))\footnote{The continuous-time functions correspond to the Euler-integral of its discrete counterparts.},
\end{equation}
where $F:\mathcal{X}\times\mathcal{U}\rightarrow \mathbb{R}^n$ is a fully nonlinear function defining the vector field. Note that unlike the system in~\Eqref{eqn:true_continuous}, we do not make any assumptions on $F(x(t),u(t))$ being control-affine. Furthermore, recall from Sec.~\ref{subsec:MBRL_prelim} that the sets $\mathcal{X} \subset \mathbb{R}^n$ and $\mathcal{U} \subset \mathbb{R}^m$, over which the MBRL experiments take place, are compact. Additionally, $W(t,x(t), u(t))\in \mathbb{R}^n$ represents the disturbance perturbing the system.  As before, we denote by $\hat{F}_\theta(x(t),u(t))$ the approximation of $F(x(t),u(t))$, and its affine approximate as 
\begin{equation}\label{eq:continuous_control_affine}
    \hat{F}^{a}_\theta(x(t),u(t)) ={G_\theta(x(t))+H_\theta(x(t))u(t)}.
\end{equation} Subsequently, we define the residual error $l(t,x(t),u(t))$ and an affinization error $a(x(t),u(t))$ as 
\begin{align*}
    l(t,x(t),u(t)) &\triangleq F(x(t),u(t))+W(t,x(t), u(t))-\hat{F}_\theta(x(t),u(t))\\
    a(x(t),u(t)) &\triangleq \hat{F}_\theta(x(t),u(t)) - \hat{F}^{a}_\theta(x(t),u(t)).
\end{align*}
Note that $\|a(x(t),u(t))\|\leq \epsilon_a$ in the \lmbrl framework by~\Eqref{eq:switching}. 

We pose the following assumptions.
\begin{assumption}\label{assumption}\hfill
    \begin{enumerate}
        \item The functions $F(x(t),u(t))$ and $W(t,x(t), u(t))$ are Lipschitz continuous over $\mathcal{X} \times \mathcal{U}$ and $[0,t_{\max}) \times \mathcal{X} \times \mathcal{U}$, respectively, for $0 < t_{\max} \leq H$, where $H$ is a known finite time horizon of the episode. The learned model $\hat{F}_\theta(x(t),u(t))$ is Lipschitz continuous in $\mathcal{X}$, and continuously differentiable\footnote{More precisely, $\mathcal{C}^1$ everywhere except finite sets of measure zero.} ($\mathcal{C}^1$) in $\mathcal{U}$.
        
        \item The learned model is uniformly bounded over $(t,x,u) \in [0,t_{\max}) \times \mathcal{X} \times \mathcal{U}$: 
        \begin{align}
            \|F(x(t),u(t))+W(t,x(t), u(t))-\hat{F}_\theta(x(t),u(t))\|&\leq \epsilon_l,\label{eq:nn_model_error}
        \end{align}
        where the bound $\epsilon_l$ is assumed to be known. 
    \end{enumerate}
    
\end{assumption}

See Appendix~\ref{appendix:assumption} for remarks on this assumption.

Next, we set
\begin{equation}\label{eq:final_augmented_input}
u(t) = \us(t) + u_a(t),
\end{equation}
where $\us(t)$ is the continuous-time signal converted from the discrete control input produced by the underlying MBRL, and $u_a(t)$ is the \ellone control input. As described in Section~\ref{subsec:L1-prelim}, the \ellone controller estimates the uncertainty by following the piecewise constant adaptive law~(\Eqref{eqn:adaptive_law_continuous}). Now, we aim to evaluate the estimation error $e(t,x(t),u(t))$:
\begin{align}
    e(t,&x(t),u(t)) \triangleq l(t,x(t),u(t)) +  a(x(t),u(t)) - \hat{\sigma}(t) \notag \\
   = &\label{eq:final_error} F(x(t),u(t)) + W(t,x(t), u(t)) - G_\theta(x(t)) - H_\theta(x(t))u(t) - \hat{\sigma}(t).
\end{align}
 Our interest in evaluating the estimation error is articulated in Remark~\ref{rem:Thm1}, Appendix~\ref{appendix:theoretical_analysis}, where we also provide proof of the following theorem. We note here that the sets $\mathcal{X}$ and $\mathcal{U}$ in the following result are compact due to the inherent nature of the MBRL algorithm, as described in Sec.~\ref{subsec:MBRL_prelim}.
\begin{theorem}\label{theorem}
Consider the system described by~\Eqref{eq:ground_truth_nonlinear}, and its learned control-affine representation in~\Eqref{eq:continuous_control_affine}. Additionally, assume that the system is operating under the augmented feedback control presented in~\Eqref{eq:final_augmented_input}. Let  $A_s = \texttt{diag}\{\lambda_1, \dots, \lambda_n\}$ be the Hurwitz matrix that is used in the definition of the state predictor~(\Eqref{eqn:predictor_continuous}). If Assumption~\ref{assumption} holds, then the estimation error defined in~\Eqref{eq:final_error} satisfies $\|e(t,x(t),u(t)\| \leq \epsilon_l+\epsilon_a$, $ \forall t\in [0,T_s)$ and
\begin{align*}
    \|e(t,x(t),u(t))\| = 2\epsilon_a + \mathcal{O}(T_s)\quad \forall t \in [T_s,t_{\max}), 
\end{align*}
 where $0< T_s < t_{\max}  \leq H < \infty$, and $H$ is the known bounded horizon (see Sec.~\ref{subsec:MBRL_prelim}).
    

\end{theorem}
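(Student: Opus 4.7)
The plan is to analyze the continuous-time prediction-error dynamics obtained by differencing the state predictor~\Eqref{eqn:predictor_continuous}, instantiated with $G_\theta, H_\theta$ from~\Eqref{eq:continuous_control_affine}, against the true system~\Eqref{eq:ground_truth_nonlinear}. Writing $\eta(t) \triangleq l(t,x(t),u(t)) + a(x(t),u(t))$ for the total uncertainty, the prediction error $\tilde{x}(t) \triangleq \hat{x}(t)-x(t)$ satisfies
\begin{equation*}
    \dot{\tilde{x}}(t) = A_s\,\tilde{x}(t) - e(t,x(t),u(t)), \qquad e = \eta - \hat{\sigma}.
\end{equation*}
This identity, combined with the initial condition $\hat{\sigma}(0)=0$ from Algorithm~\ref{algo:L1}, anchors both sub-statements.

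For $t\in[0,T_s)$, the piecewise-constant adaptive law~\Eqref{eqn:adaptive_law_continuous} holds $\hat{\sigma}(t)=0$, so $e(t)=l+a$, and the triangle inequality with $\|l\|\leq\epsilon_l$ (Assumption~\ref{assumption}(2)) and $\|a\|\leq\epsilon_a$ (switching law~\Eqref{eq:switching}) yields $\|e(t)\|\leq\epsilon_l+\epsilon_a$ immediately.

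For $t\in[iT_s,(i+1)T_s)$ with $i\geq 1$, I would apply variation of constants to the error dynamics on the preceding sampling interval, where $\hat{\sigma}$ is held constant, to obtain
\begin{equation*}
    \tilde{x}(iT_s) = e^{A_sT_s}\tilde{x}((i-1)T_s) + \Phi(T_s)\hat{\sigma}((i-1)T_s) - \int_{(i-1)T_s}^{iT_s} e^{A_s(iT_s-\tau)}\eta(\tau)\,d\tau,
\end{equation*}
and substitute this into $\hat{\sigma}(iT_s)=-\Phi(T_s)^{-1}e^{A_sT_s}\tilde{x}(iT_s)$. The choice $A_s=\texttt{diag}\{\lambda_i\}$ makes $A_s$, $e^{A_sT_s}$, and $\Phi(T_s)=A_s^{-1}(e^{A_sT_s}-I)$ pairwise commute, so identities such as $\Phi(T_s)^{-1}e^{A_sT_s}\Phi(T_s)=e^{A_sT_s}$ collapse the algebra. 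Pulling $\eta((i-1)T_s)$ out of the integral (the remainder is controlled by the Lipschitz regularity granted by Assumption~\ref{assumption}(1) on each sub-interval between switching events) and expanding $e^{A_sT_s}=I+\mathcal{O}(T_s)$ then gives $\hat{\sigma}(iT_s)=\eta((i-1)T_s)+\mathcal{O}(T_s)$, provided $\tilde{x}((i-1)T_s)=\mathcal{O}(T_s)$, which I will establish inductively using the Hurwitz property of $A_s$ to close the loop.

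Substituting back into $e(t)=\eta(t)-\hat{\sigma}(iT_s)$ yields
\begin{equation*}
    e(t) = \bigl(l(t)-l((i-1)T_s)\bigr) + \bigl(a(t)-a((i-1)T_s)\bigr) + \mathcal{O}(T_s),
\end{equation*}
and Lipschitz continuity on the compact set $\mathcal{X}\times\mathcal{U}$ shrinks the first parenthesis to $\mathcal{O}(T_s)$, while $\|a(\cdot)\|\leq\epsilon_a$ and the triangle inequality bound the second by $2\epsilon_a$, producing $\|e(t)\|=2\epsilon_a+\mathcal{O}(T_s)$. The main obstacle I anticipate is the bookkeeping around the piecewise-continuous $a(x(t),u(t))$: its $\mathcal{O}(\epsilon_a)$ jump at a switching instant breaks Lipschitz continuity across the sampling boundary, and one must verify that such a jump enters the adaptive-law formula only through an integral of length at most $T_s$, so that after multiplication by $\Phi(T_s)^{-1}=\mathcal{O}(T_s^{-1})$ it contributes only the uniform $2\epsilon_a$ term rather than blowing up. Compactness of $\mathcal{X}\times\mathcal{U}$ and the finite horizon $H$ then secure uniformity of all constants across the episode.
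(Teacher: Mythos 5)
Your overall route is the same as the paper's: handle $[0,T_s)$ via $\hat{\sigma}\equiv 0$ and the triangle inequality, and for $i\ge 1$ use the variation-of-constants formula for $\tilde{x}(iT_s)$ together with the piecewise-constant adaptive law to show that $\hat{\sigma}(iT_s)$ tracks the total uncertainty up to $\mathcal{O}(T_s)$ plus the affinization terms. The one step that does not survive scrutiny as written is the induction. You propose to establish $\tilde{x}((i-1)T_s)=\mathcal{O}(T_s)$ inductively and then conclude $\hat{\sigma}(iT_s)=\eta((i-1)T_s)+\mathcal{O}(T_s)$; but $\hat{\sigma}(iT_s)$ is obtained by multiplying $\tilde{x}(iT_s)$ by $\Phi(T_s)^{-1}e^{A_sT_s}=\mathcal{O}(T_s^{-1})$, so an $\mathcal{O}(T_s)$ bound on $\tilde{x}((i-1)T_s)$ only controls its contribution to $\hat{\sigma}(iT_s)$ at the $\mathcal{O}(1)$ level --- the induction does not close. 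The fix is already contained in your own formula: by the adaptive law at step $i-1$, $\Phi(T_s)\hat{\sigma}((i-1)T_s)=-e^{A_sT_s}\tilde{x}((i-1)T_s)$, so the first two terms of your variation-of-constants expression cancel \emph{exactly}, leaving $\tilde{x}(iT_s)=-\int_{(i-1)T_s}^{iT_s}e^{A_s(iT_s-\tau)}\eta(\tau)\,d\tau$ with no history dependence and no induction needed. This exact cancellation is precisely the device the paper uses to arrive at~\Eqref{eq:xtilde_general}, after which $\hat{\sigma}(iT_s)$ is evaluated componentwise via the integral mean value theorem, giving $\hat{\sigma}_j(iT_s)=e^{\lambda_jT_s}d_j(t_{c_j})$.

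Second, your plan to pull $\eta((i-1)T_s)$ out of the integral by Lipschitz regularity is not available for the affinization error $a$, which jumps at switching instants --- you correctly flag this at the end. The paper sidesteps it by applying the mean value theorem to the integral (which only requires the uniform bound $\|a\|\le\epsilon_a$ at the intermediate point $t_c$, not any regularity of $a$) and by reserving the Lipschitz argument for $l$ alone; this is also the cleanest way to realize your final decomposition, in which one $\epsilon_a$ comes from $a$ at the current time and one from $a$ at the evaluation point inside the previous sampling interval. With these two repairs your argument coincides with the paper's proof.
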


\section{Simulation Results}
In this section, we demonstrate the efficacy of our proposed $\mathcal{L}_{1}$-MBRL framework using the METRPO~\citep{kurutach2018modelensemble} and MBMF~\citep{nagabandi2018neural} as the baseline MBRL method~\footnote{METRPO and MBMF conform to the standard MBRL framework but employ different strategies for control optimization (refer to Section~\ref{subsec:MBRL_prelim}). Selecting these baselines demonstrates that our framework is agnostic to various control optimization methods, illustrating its functionality as a versatile add-on module.}, and we defer the \ellone-MBMF derivations and details to  Appendix~\ref{sec:extended_simulation}. We briefly note here that we observed a similar performance improvement with \ellone augmentation as for \ellone-METRPO.


In our first experimental study, we evaluate the proposed $\mathcal{L}_{1}$-MBRL framework on five different OpenAI Gym environments~\citep{brockman2016openai} with varying levels of state and action complexity. For each environment, we report the mean and standard deviation of the average reward per episode across multiple random seeds. Additionally, we incorporate noise into the observation ($\sigma_o = 0.1$) or action ($\sigma_a = 0.1$) by sampling from a uniform distribution~\citep{wang2019benchmarking}. This enables us to evaluate the impact of noise on MBRL performance. The results are summarized in Table~\ref{tab:main_result}. Further details of the experiment setup are provided in the Appendix~\ref{sec:extended_simulation}.

\vspace{-5pt}
\begin{table*}[h]
{\renewcommand{\arraystretch}{1.3}
  \footnotesize
  \centering
  \caption{Performance comparison between METRPO and $\mathcal{L}_1$-METRPO (Ours). The average performance and standard deviation over multiple seeds are evaluated for a window size of 3000 timesteps at the end of the training for multiple seeds. Higher performance cases are marked in bold and green.}
  \label{tab:main_result}
  \vspace*{-0.5em}
  \addtolength{\tabcolsep}{-2pt}
  \begin{center}
  \resizebox{\columnwidth}{!}{%
  \begin{tabular}{c|cc|cc|cc}
    \toprule
	\textbf{ } & 
	\multicolumn{2}{c|}{\bfseries \footnotesize Noise-free} & 
	\multicolumn{2}{c|}{\bfseries \footnotesize $\mathbf{\sigma_a=0.1}$} &
	\multicolumn{2}{c}{\bfseries \footnotesize $\mathbf{\sigma_o=0.1}$}\\ 
	\textbf{\footnotesize Env.} 
	
	& {\footnotesize METRPO} & {\footnotesize $\mathcal{L}_1$-METRPO} & {\footnotesize METRPO} & {\footnotesize $\mathcal{L}_1$-METRPO} & {\footnotesize METRPO} & {\footnotesize $\mathcal{L}_1$-METRPO}
	\\ \toprule

	{\footnotesize Inv. P.}
    & $-51.3 \pm 67.8$     & \textcolor{applegreen}{$\mathbf{-0.0 \pm 0.0}$}     & $-105.2\pm 81.6$     & \textcolor{applegreen}{$\mathbf{-0.0 \pm 0.0}$}     & $-74.22 \pm 74.5$     & \textcolor{applegreen}{$\mathbf{-21.3 \pm 20.7}$}\\ \midrule

 	{\footnotesize Swimmer}
 	&$309.5 \pm 49.3$     &\textcolor{applegreen}{$\mathbf{313.8 \pm 18.7}$}       & $258.7\pm 113.7$       & \textcolor{applegreen}{$\mathbf{322.7 \pm 5.3}$}        & $30.7 \pm 56.1$     & \textcolor{applegreen}{$\mathbf{79.2 \pm 85.0}$}\\ \midrule
 	
    {\footnotesize Hopper}
 	& $1140.1 \pm 552.4$     & \textcolor{applegreen}{$\mathbf{1491.4 \pm 623.8}$}     & $609.0\pm793.5$   & \textcolor{applegreen}{$\mathbf{868.7\pm735.8}$}     & $-1391.2\pm 266.5$     & \textcolor{applegreen}{$\mathbf{-486.6 \pm 459.9}$} \\ \midrule
 	
    {\footnotesize Walker}
 	& \textcolor{applegreen}{$\mathbf{-6.6 \pm 0.3}$}      & $-6.9 \pm 0.5$   & $-9.8 \pm 2.2$     & \textcolor{applegreen}{$\mathbf{-5.9 \pm 0.3}$}     & $-30.3 \pm 28.2$     & \textcolor{applegreen}{$\mathbf{-6.3 \pm 0.3}$}\\ \midrule
 	
    {\footnotesize Halfcheetah}
 	& $2367.3\pm 1274.5$     & \textcolor{applegreen}{$\mathbf{2588.6\pm 955.1}$}    & $1920.3 \pm 932.4$     & \textcolor{applegreen}{$\mathbf{2515.9 \pm 1216.4}$}     & $1419.0\pm 517.2$     & \textcolor{applegreen}{$\mathbf{1906.3\pm 972.7}$}\\

    \bottomrule
	\end{tabular}}
	\end{center}}
    \vspace*{-0.5em}
\end{table*}

The experimental results demonstrate that our proposed \lmbrl framework outperforms the baseline METRPO in almost every scenario. Notably, the advantages of the \ellone augmentation become more apparent under noisy conditions.

\subsection{Ablation Study}
We conduct an ablation study to compare the specific contributions of \ellone control in the training and testing. During testing, \ellone control explicitly rejects system uncertainties and improves performance. On the other hand, during training, \ellone additionally influences the learning by shifting the training dataset distribution. To evaluate the effect of \ellone at each phase, we compare four scenarios: 1) no \ellone during training or testing, 2) \ellone applied only during testing, 3) \ellone used only during training, and 4) \ellone applied during both training and testing. The results are summarized in  Fig.~\ref{fig:ablation_barplot}. 
\begin{figure}[!htb]
    \centering
    \includegraphics[width = 0.9\textwidth]{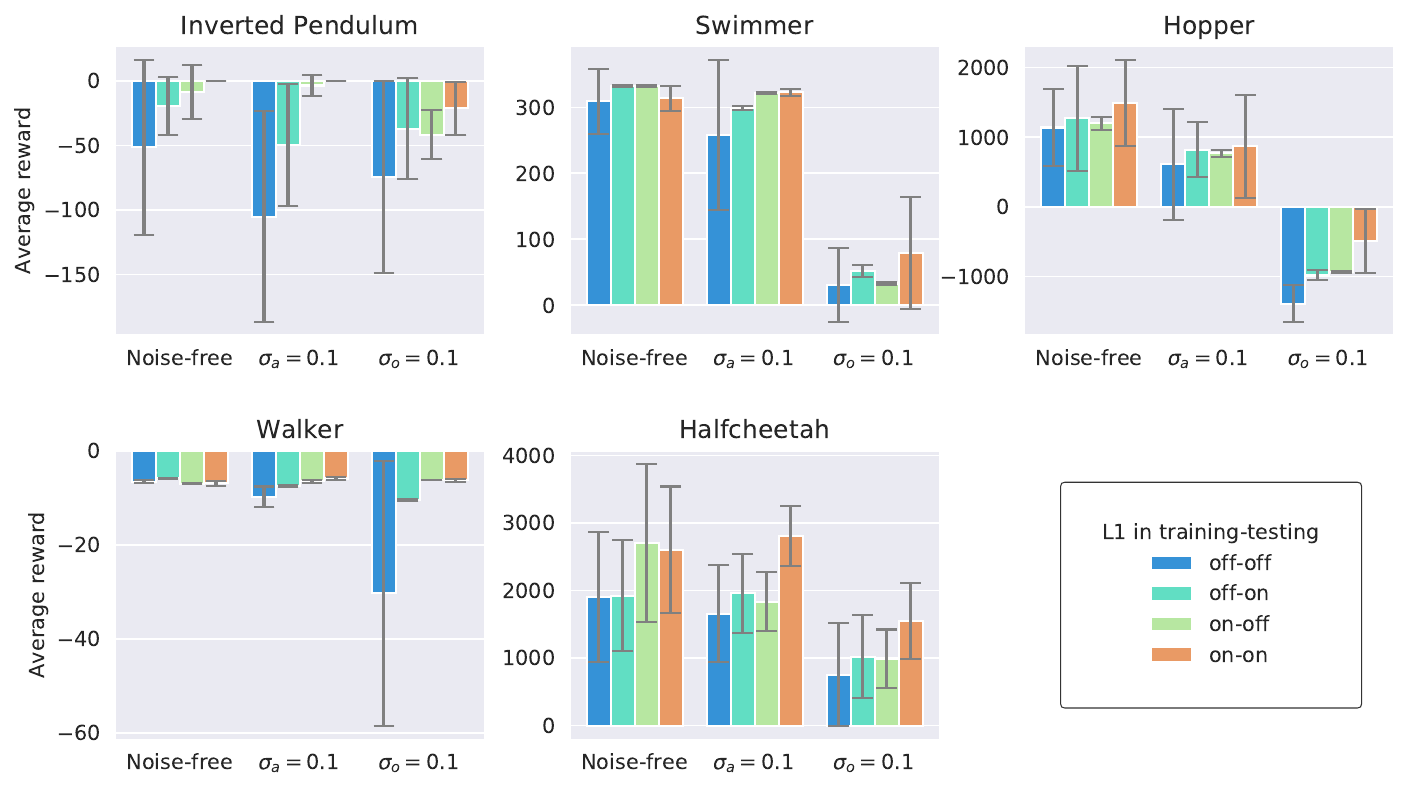}
    \caption{Contribution of \ellone in the training and testing phase. The notation \ellone on (off)-on (off) indicates \ellone is applied (not applied) during training-testing, respectively. The error bar ranges for one standard deviation of the performance. On-on and off-off correspond to our main result in Table~\ref{tab:main_result}. As expected, the on-on case achieved the highest performance in most scenarios.}
    \label{fig:ablation_barplot}
    \vspace{-5pt}
\end{figure}

The results depicted in Fig.~\ref{fig:ablation_barplot} demonstrate that the influence of \ellone control during training and testing varies across different environments and noise types. However, as anticipated, the highest performance is achieved when \ellone is applied in both the training and testing phases.

In order to evaluate the effectiveness of the \ellone-MBRL framework in addressing the sim-to-real gap, we conducted a secondary ablation study. First, we trained the model without \ellone in a noise-free environment and subsequently tested the model with and without \ellone under a noisy environment. The results are demonstrated in Table~\ref{table:sim2real}. This result indicates that our \lmbrl framework effectively addresses the sim-to-real gap, and this demonstrates the potential for directly extending our framework to the offline MBRL setting, presenting promising opportunities for future research.

\begin{table}[h]
  \footnotesize
  \centering
  \caption{Addressing the sim-to-real gap with \ellone augmentation: The original METRPO was initially trained on an environment without uncertainty. Subsequently, the policy was deployed in a noisy environment that emulates real-world conditions, with and without \ellone augmentation.}
  \label{tab:Reward Table}
  \addtolength{\tabcolsep}{-2pt}
  \begin{center}
  \resizebox{0.99\columnwidth}{!}{%
  \begin{tabular}{c|cc|cc|cc}

    \toprule
	\textbf{ } & 
	\multicolumn{2}{c|}{\bfseries \footnotesize $\mathbf{\sigma_a=0.1}$} &
	\multicolumn{2}{c}{\bfseries \footnotesize $\mathbf{\sigma_o=0.1}$} & 
    \multicolumn{2}{c}{\bfseries \footnotesize $\mathbf{\sigma_a=0.1}$~\&~$\mathbf{\sigma_o=0.1}$ }\\ 
	\textbf{\scriptsize Env.} 
	
	& {\footnotesize METRPO} & {\footnotesize $\mathcal{L}_1$-METRPO} & {\footnotesize METRPO} & {\footnotesize $\mathcal{L}_1$-METRPO} & {\footnotesize METRPO} & {\footnotesize $\mathcal{L}_1$-METRPO}\\ \toprule

	{\footnotesize Inv. P.}
    & $30.2\pm 45.1$     & \textcolor{applegreen}{$\mathbf{-0.0 \pm 0.0}$}     & $-74.1 \pm 53.1$     & \textcolor{applegreen}{$\mathbf{-3.1 \pm 2.0}$}    &  $-107.0 \pm 72.4$     & \textcolor{applegreen}{$\mathbf{-6.1 \pm 4.6}$}\\ \midrule

 	{\footnotesize Swimmer}
  & $250.8\pm 130.2$       & \textcolor{applegreen}{$\mathbf{330.5 \pm 5.7}$}
       & \textcolor{applegreen}{$\mathbf{337.8 \pm 2.9}$}    & $331.2 \pm 8.34$ &  $248.2 \pm 133.6$     & \textcolor{applegreen}{$\mathbf{327.3 \pm 6.8}$}\\ \midrule
 	
    {\footnotesize Hopper}
 	& $198.9\pm 617.8$   & \textcolor{applegreen}{$\mathbf{623.4\pm 405.6}$}     & $-84.5\pm 1035.8$     & \textcolor{applegreen}{$\mathbf{ 157.1 \pm 379.7}$} &  $87.5 \pm 510.2$     & \textcolor{applegreen}{$\mathbf{309.8 \pm 477.8}$}\\ \midrule
 	
    {\footnotesize Walker}
 	& \textcolor{applegreen}{$\mathbf{-6.0 \pm 0.8}$} & $-6.3 \pm 0.7$      & $-6.4 \pm 0.4$     & \textcolor{applegreen}{$\mathbf{-6.08 \pm 0.6}$} &  $-6.3 \pm 0.4$     & \textcolor{applegreen}{$\mathbf{-5.2 \pm 1.5}$}\\ \midrule
 	
    {\footnotesize Halfcheetah}
 	 & $1845.8 \pm 600.9$     & \textcolor{applegreen}{$\mathbf{1965.3 \pm 839.5}$}     & $1265.0\pm 440.8$     & \textcolor{applegreen}{$\mathbf{1861.6\pm 605.5}$}& $1355.0\pm 335.6$     & \textcolor{applegreen}{$\mathbf{1643.6\pm 712.5}$}\\
    \bottomrule
 	\end{tabular}}
	\end{center}
    \vspace*{-12pt}
    \label{table:sim2real}
\end{table}
\section{Limitations}
{\bf (Performance of the base MBRL)} Our \lmbrl scheme rejects uncertainty estimates derived from the \emph{learned} nominal dynamics. As a result, the performance of \lmbrl is inherently tied to the baseline MBRL algorithm, and \ellone augmentation cannot independently guarantee good performance. This can be related to the role of $\epsilon_l$ in~\Eqref{eq:nn_model_error}. Empirical evidence in Fig.~\ref{fig:ablation_barplot} illustrates this point that, despite meaningful improvements, the performance of scenarios augmented with \ellone is closely correlated to that of METRPO without \ellone augmentation.


{\bf (Trade-off in choosing $\epsilon_a$)} In Sec.~\ref{subsec:l1_augmentation}, we mentioned that as $\epsilon_a$ approaches zero, the baseline MBRL is recovered. This implies that for small values of $\epsilon_a$, the robustness properties exhibited by the \ellone control are compromised. Conversely, if $\epsilon_a$ is increased excessively, it permits significant deviations between the control-affine and nonlinear models, potentially allowing for larger errors in the state predictor (see Section~\ref{sec:theoretical_analysis}). Our heuristic observation from the experiments is to select an $\epsilon_a$ that results in approximately 0-100 affinization switches per 1000 time steps for systems with low complexity ($n<5$) and 200-500 switches for more complex systems.


\section{Conclusion}
In this paper, we proposed an \ellone-MBRL control theoretic add-on scheme to robustify MBRL algorithms against model and environment uncertainties. We affinize the trained nonlinear model according to a switching rule along the input trajectory, enabling the use of \ellone adaptive control. Without perturbing the underlying MBRL algorithm, we were able to improve the overall performance in almost all scenarios with and without aleatoric uncertainties. 

The results open up interesting research directions where we wish to test the applicability of \lmbrl on offline MBRL algorithms to address the sim-to-real gap~\citep{kidambi2020morel,yu2020mopo}. Moreover, its consolidation with a distributionally robust optimization problem to address the distribution shift is of interest. Finally, we will also research the \lmbrl design for MBRL algorithms with probabilistic  models~\citep{chua2018deep, Wang2020Exploring} to explore a method to utilize the covariance information in addition to mean dynamics.

\subsubsection*{Acknowledgments}
This work is financially supported by National Aeronautics and Space Administration (NASA) ULI  (80NSSC22M0070), NASA USRC (NNH21ZEA001N-USRC), Air Force Office of Scientific Research (FA9550-21-1-0411), National Science Foundation (NSF) AoF Robust Intelligence (2133656), NSF CMMI (2135925), and NSF SLES
(2331878).

\bibliography{iclr2024_conference}

\begin{thebibliography}{61}
\providecommand{\natexlab}[1]{#1}
\providecommand{\url}[1]{\texttt{#1}}
\expandafter\ifx\csname urlstyle\endcsname\relax
  \providecommand{\doi}[1]{doi: #1}\else
  \providecommand{\doi}{doi: \begingroup \urlstyle{rm}\Url}\fi

\bibitem[Ackerman et~al.(2017)Ackerman, Xargay, Choe, Hovakimyan, Cotting,
  Jeffrey, Blackstun, Fulkerson, Lau, and Stephens]{ackerman2017evaluation}
Kasey~A Ackerman, Enric Xargay, Ronald Choe, Naira Hovakimyan, M~Christopher
  Cotting, Robert~B Jeffrey, Margaret~P Blackstun, T~Paul Fulkerson, Timothy~R
  Lau, and Shawn~S Stephens.
\newblock Evaluation of an $\mathcal{L}_1$ {A}daptive {F}light {C}ontrol {L}aw
  on \uppercase{C}alspan’s variable-stability \uppercase{L}earjet.
\newblock \emph{AIAA Journal of Guidance, Control, and Dynamics}, 40\penalty0
  (4):\penalty0 1051--1060, 2017.

\bibitem[Arevalo-Castiblanco et~al.(2021)Arevalo-Castiblanco, Uribe, and
  Mojica-Nava]{arevalo2021model}
Miguel~F Arevalo-Castiblanco, C{\'e}sar~A Uribe, and Eduardo Mojica-Nava.
\newblock Model {R}eference {A}daptive {C}ontrol for {O}nline {P}olicy
  {A}daptation and {N}etwork {S}ynchronization.
\newblock In \emph{IEEE Conference on Decision and Control}, pp.\  4071--4076,
  2021.

\bibitem[{\AA}str{\"o}m \& Wittenmark(2013){\AA}str{\"o}m and
  Wittenmark]{aastrom2013computer}
Karl~J {\AA}str{\"o}m and Bj{\"o}rn Wittenmark.
\newblock \emph{Computer-{C}ontrolled {S}ystems: \uppercase{t}heory and
  \uppercase{d}esign}.
\newblock Courier Corporation, 2013.

\bibitem[Brockman et~al.(2016)Brockman, Cheung, Pettersson, Schneider,
  Schulman, Tang, and Zaremba]{brockman2016openai}
Greg Brockman, Vicki Cheung, Ludwig Pettersson, Jonas Schneider, John Schulman,
  Jie Tang, and Wojciech Zaremba.
\newblock Open\uppercase{ai} gym.
\newblock \emph{arXiv preprint arXiv:1606.01540}, 2016.

\bibitem[Chen et~al.(2021)Chen, Liu, Zhu, Xu, Ding, Li, and
  Zhao]{chen2021context}
Baiming Chen, Zuxin Liu, Jiacheng Zhu, Mengdi Xu, Wenhao Ding, Liang Li, and
  Ding Zhao.
\newblock Context-{A}ware {S}afe {R}einforcement {L}earning for
  {N}on-stationary {E}nvironments.
\newblock In \emph{IEEE International Conference on Robotics and Automation},
  pp.\  10689--10695, 2021.

\bibitem[Cheng et~al.(2022)Cheng, Zhao, Wang, Block, and
  Hovakimyan]{cheng2022improving}
Yikun Cheng, Pan Zhao, Fanxin Wang, Daniel~J Block, and Naira Hovakimyan.
\newblock Improving the {R}obustness of {R}einforcement {L}earning {P}olicies
  with $\mathcal{L}_1$ {A}daptive {C}ontrol.
\newblock \emph{IEEE Robotics and Automation Letters}, 7\penalty0 (3):\penalty0
  6574--6581, 2022.

\bibitem[Choi et~al.(2020)Choi, Casta{\~n}eda, Tomlin, and
  Sreenath]{choi2020reinforcement}
Jason Choi, Fernando Casta{\~n}eda, Claire~J Tomlin, and Koushil Sreenath.
\newblock {R}einforcement {L}earning for {S}afety-{C}ritical {C}ontrol under
  {M}odel {U}ncertainty, using {C}ontrol \uppercase{L}yapunov {F}unctions and
  {C}ontrol {B}arrier {F}unctions.
\newblock In \emph{Robotics: Science and Systems}, 2020.

\bibitem[Chua et~al.(2018)Chua, Calandra, McAllister, and Levine]{chua2018deep}
Kurtland Chua, Roberto Calandra, Rowan McAllister, and Sergey Levine.
\newblock Deep {R}einforcement {L}earning in a {H}andful of {T}rials {U}sing
  {P}robabilistic {D}ynamics {M}odels.
\newblock \emph{Advances in Neural Information Processing Systems}, 31, 2018.

\bibitem[Clavera et~al.(2018)Clavera, Rothfuss, Schulman, Fujita, Asfour, and
  Abbeel]{clavera2018model}
Ignasi Clavera, Jonas Rothfuss, John Schulman, Yasuhiro Fujita, Tamim Asfour,
  and Pieter Abbeel.
\newblock Model-{B}ased {R}einforcement {L}earning via {M}eta-{P}olicy
  {O}ptimization.
\newblock In \emph{Conference on Robot Learning}, pp.\  617--629, 2018.

\bibitem[Deisenroth et~al.(2013)Deisenroth, Fox, and
  Rasmussen]{deisenroth2013gaussian}
Marc~Peter Deisenroth, Dieter Fox, and Carl~Edward Rasmussen.
\newblock Gaussian {P}rocesses for {D}ata-{E}fficient {L}earning in {R}obotics
  and {C}ontrol.
\newblock \emph{IEEE Transactions on Pattern Analysis and Machine
  Intelligence}, 37\penalty0 (2):\penalty0 408--423, 2013.

\bibitem[Finn et~al.(2017)Finn, Abbeel, and Levine]{finn2017model}
Chelsea Finn, Pieter Abbeel, and Sergey Levine.
\newblock Model-agnostic {M}eta-{L}earning for {F}ast {A}daptation of {D}eep
  {N}etworks.
\newblock In \emph{International Conference on Machine Learning}, pp.\
  1126--1135, 2017.

\bibitem[Gregory et~al.(2009)Gregory, Cao, Xargay, Hovakimyan, and
  Zou]{gregory2009l1}
Irene Gregory, Chengyu Cao, Enric Xargay, Naira Hovakimyan, and Xiaotian Zou.
\newblock $\mathcal{L}_1$ {A}daptive {C}ontrol {D}esign for {NASA} {A}ir{STAR}
  {F}light {T}est {V}ehicle.
\newblock In \emph{AIAA Guidance, Navigation, and Control Conference}, pp.\
  5738, 2009.

\bibitem[Gregory et~al.(2010)Gregory, Xargay, Cao, and
  Hovakimyan]{gregory2010flight}
Irene Gregory, Enric Xargay, Chengyu Cao, and Naira Hovakimyan.
\newblock Flight {T}est of an $\mathcal{L}_1$ {A}daptive {C}ontroller on the
  \uppercase{NASA} \uppercase{A}ir\uppercase{STAR} {F}light {T}est {V}ehicle.
\newblock In \emph{AIAA Guidance, Navigation, and Control Conference}, pp.\
  8015, 2010.

\bibitem[Heess et~al.(2015)Heess, Wayne, Silver, Lillicrap, Tassa, and
  Erez]{heess2015learning}
Nicolas Heess, Greg Wayne, David Silver, Timothy Lillicrap, Yuval Tassa, and
  Tom Erez.
\newblock Learning {C}ontinuous {C}ontrol {P}olicies by {S}tochastic {V}alue
  {G}radients.
\newblock \emph{arXiv preprint arXiv:1510.09142}, 2015.

\bibitem[Hovakimyan \& Cao(2010)Hovakimyan and Cao]{hovakimyan2010L1}
Naira Hovakimyan and Chengyu Cao.
\newblock \emph{{${\mathcal {L}_{1}}$} {A}daptive {C}ontrol {T}heory:
  {G}uaranteed {R}obustness with {F}ast {A}daptation}.
\newblock SIAM, 2010.

\bibitem[Ibarz et~al.(2021)Ibarz, Tan, Finn, Kalakrishnan, Pastor, and
  Levine]{ibarz2021train}
Julian Ibarz, Jie Tan, Chelsea Finn, Mrinal Kalakrishnan, Peter Pastor, and
  Sergey Levine.
\newblock How to train your robot with deep reinforcement learning: {L}essons
  we have learned.
\newblock \emph{The International Journal of Robotics Research}, 40\penalty0
  (4-5):\penalty0 698--721, 2021.

\bibitem[Khalil(2002)]{khalil2002nonlinear}
Hassan~K Khalil.
\newblock Nonlinear {S}ystems {T}hird {E}dition.
\newblock \emph{Patience Hall}, 2002.

\bibitem[Kharisov(2013)]{kharisov20131}
Evgeny Kharisov.
\newblock \emph{$\mathcal{L}_1$ {A}daptive {O}utput-{F}eedback {C}ontrol
  {A}rchitectures}.
\newblock PhD thesis, 2013.

\bibitem[Khojasteh et~al.(2020)Khojasteh, Dhiman, Franceschetti, and
  Atanasov]{khojasteh2020probabilistic}
Mohammad~Javad Khojasteh, Vikas Dhiman, Massimo Franceschetti, and Nikolay
  Atanasov.
\newblock Probabilistic {S}afety {C}onstraints for {L}earned {H}igh {R}elative
  {D}egree {S}ystem {D}ynamics.
\newblock In \emph{Learning for Dynamics and Control}, pp.\  781--792, 2020.

\bibitem[Kidambi et~al.(2020)Kidambi, Rajeswaran, Netrapalli, and
  Joachims]{kidambi2020morel}
Rahul Kidambi, Aravind Rajeswaran, Praneeth Netrapalli, and Thorsten Joachims.
\newblock \uppercase{Morel}: {M}odel-{B}ased {O}ffline {R}einforcement
  {L}earning.
\newblock \emph{Advances in Neural Information Processing Systems},
  33:\penalty0 21810--21823, 2020.

\bibitem[Knuth et~al.(2021)Knuth, Chou, Ozay, and Berenson]{knuth2021planning}
Craig Knuth, Glen Chou, Necmiye Ozay, and Dmitry Berenson.
\newblock Planning with {L}earned {D}ynamics: {P}robabilistic {G}uarantees on
  {S}afety and {R}eachability via \uppercase{L}ipschitz {C}onstants.
\newblock \emph{IEEE Robotics and Automation Letters}, 6\penalty0 (3):\penalty0
  5129--5136, 2021.

\bibitem[Kocijan et~al.(2004)Kocijan, Murray-Smith, Rasmussen, and
  Girard]{kocijan2004gaussian}
Ju{\v{s}} Kocijan, Roderick Murray-Smith, Carl~Edward Rasmussen, and Agathe
  Girard.
\newblock Gaussian {P}rocess {M}odel {B}ased {P}redictive {C}ontrol.
\newblock In \emph{IEEE American Control Conference}, volume~3, pp.\
  2214--2219, 2004.

\bibitem[Koller et~al.(2018)Koller, Berkenkamp, Turchetta, and
  Krause]{koller2018learning}
Torsten Koller, Felix Berkenkamp, Matteo Turchetta, and Andreas Krause.
\newblock Learning-{B}ased {M}odel {P}redictive {C}ontrol for {S}afe
  {E}xploration.
\newblock In \emph{IEEE Conference on Decision and Control}, pp.\  6059--6066,
  2018.

\bibitem[Kurutach et~al.(2018)Kurutach, Clavera, Duan, Tamar, and
  Abbeel]{kurutach2018modelensemble}
Thanard Kurutach, Ignasi Clavera, Yan Duan, Aviv Tamar, and Pieter Abbeel.
\newblock Model-{E}nsemble {T}rust-{R}egion {P}olicy {O}ptimization.
\newblock In \emph{International Conference on Learning Representations}, 2018.

\bibitem[Lakshmanan et~al.(2020)Lakshmanan, Gahlawat, and
  Hovakimyan]{lakshmanan2020safe}
Arun Lakshmanan, Aditya Gahlawat, and Naira Hovakimyan.
\newblock Safe {F}eedback {M}otion {P}lanning: {A} {C}ontraction {T}heory and
  $\mathcal{L}_1$ {A}daptive {C}ontrol {B}ased {A}pproach.
\newblock In \emph{IEEE Conference on Decision and Control}, pp.\  1578--1583,
  2020.

\bibitem[Levine \& Koltun(2013)Levine and Koltun]{levine2013guided}
Sergey Levine and Vladlen Koltun.
\newblock Guided {P}olicy {S}earch.
\newblock In \emph{International Conference on Machine Learning}, pp.\  1--9,
  2013.

\bibitem[Li et~al.(2020)Li, Yao, Xu, Cheng, and Ren]{li2020deep}
Junxiang Li, Liang Yao, Xin Xu, Bang Cheng, and Junkai Ren.
\newblock Deep {R}einforcement {L}earning for {P}edestrian {C}ollision
  {A}voidance and {H}uman-{M}achine {C}ooperative {D}riving.
\newblock \emph{Information Sciences}, 532:\penalty0 110--124, 2020.

\bibitem[Lillicrap et~al.(2016)Lillicrap, Hunt, Pritzel, Heess, Erez, Tassa,
  Silver, and Wierstra]{lillicrap2016continuous}
Timothy~P Lillicrap, Jonathan~J Hunt, Alexander Pritzel, Nicolas Heess, Tom
  Erez, Yuval Tassa, David Silver, and Daan Wierstra.
\newblock Continuous {C}ontrol with {D}eep {R}einforcement {L}earning.
\newblock In \emph{International Conference on Learning Representations}, 2016.

\bibitem[Loquercio et~al.(2019)Loquercio, Kaufmann, Ranftl, Dosovitskiy,
  Koltun, and Scaramuzza]{loquercio2019deep}
Antonio Loquercio, Elia Kaufmann, Ren{\'e} Ranftl, Alexey Dosovitskiy, Vladlen
  Koltun, and Davide Scaramuzza.
\newblock Deep {D}rone {R}acing: {F}rom {S}imulation to {R}eality with {D}omain
  {R}andomization.
\newblock \emph{IEEE Transactions on Robotics}, 36\penalty0 (1):\penalty0
  1--14, 2019.

\bibitem[Manzano et~al.(2020)Manzano, Limon, de~la Pe{\~n}a, and
  Calliess]{manzano2020robust}
Jos{\'e}~Mar{\'\i}a Manzano, Daniel Limon, David~Mu{\~n}oz de~la Pe{\~n}a, and
  Jan-Peter Calliess.
\newblock Robust {L}earning-{B}ased \uppercase{MPC} for {N}onlinear
  {C}onstrained {S}ystems.
\newblock \emph{Automatica}, 117:\penalty0 108948, 2020.

\bibitem[Mendelson(2022)]{mendelson2022schaum}
Elliott Mendelson.
\newblock \emph{Schaum's {O}utline of {C}alculus}.
\newblock McGraw-Hill Education, 2022.

\bibitem[Milz et~al.(2018)Milz, Arbeiter, Witt, Abdallah, and
  Yogamani]{milz2018visual}
Stefan Milz, Georg Arbeiter, Christian Witt, Bassam Abdallah, and Senthil
  Yogamani.
\newblock Visual \uppercase{slam} for automated driving: \uppercase{E}xploring
  the applications of deep learning.
\newblock In \emph{IEEE Conference on Computer Vision and Pattern Recognition
  Workshops}, pp.\  247--257, 2018.

\bibitem[Mnih et~al.(2015)Mnih, Kavukcuoglu, Silver, Rusu, Veness, Bellemare,
  Graves, Riedmiller, Fidjeland, Ostrovski, et~al.]{mnih2015human}
Volodymyr Mnih, Koray Kavukcuoglu, David Silver, Andrei~A Rusu, Joel Veness,
  Marc~G Bellemare, Alex Graves, Martin Riedmiller, Andreas~K Fidjeland, Georg
  Ostrovski, et~al.
\newblock Human-level {C}ontrol through {D}eep {R}einforcement {L}earning.
\newblock \emph{Nature}, 518\penalty0 (7540):\penalty0 529--533, 2015.

\bibitem[Moerland et~al.(2023)Moerland, Broekens, Plaat, Jonker,
  et~al.]{moerland2023model}
Thomas~M Moerland, Joost Broekens, Aske Plaat, Catholijn~M Jonker, et~al.
\newblock Model-{B}ased {R}einforcement {L}earning: {A} survey.
\newblock \emph{Foundations and Trends in Machine Learning}, 16\penalty0
  (1):\penalty0 1--118, 2023.

\bibitem[Nagabandi et~al.(2018)Nagabandi, Kahn, Fearing, and
  Levine]{nagabandi2018neural}
Anusha Nagabandi, Gregory Kahn, Ronald~S Fearing, and Sergey Levine.
\newblock Neural {N}etwork {D}ynamics for {M}odel-{B}ased {D}eep
  {R}einforcement {L}earning with {M}odel-{F}ree {F}ine-{T}uning.
\newblock In \emph{IEEE International Conference on Robotics and Automation},
  pp.\  7559--7566, 2018.

\bibitem[Nagabandi et~al.(2019{\natexlab{a}})Nagabandi, Clavera, Liu, Fearing,
  Abbeel, Levine, and Finn]{nagabandilearning}
Anusha Nagabandi, Ignasi Clavera, Simin Liu, Ronald~S Fearing, Pieter Abbeel,
  Sergey Levine, and Chelsea Finn.
\newblock Learning to {A}dapt in {D}ynamic, {R}eal-{W}orld {E}nvironments
  through {M}eta-{R}einforcement {L}earning.
\newblock In \emph{International Conference on Learning Representations},
  2019{\natexlab{a}}.

\bibitem[Nagabandi et~al.(2019{\natexlab{b}})Nagabandi, Finn, and
  Levine]{nagabandideep}
Anusha Nagabandi, Chelsea Finn, and Sergey Levine.
\newblock Deep {O}nline {L}earning {V}ia {M}eta-{L}earning: {C}ontinual
  {A}daptation for {M}odel-{B}ased {RL}.
\newblock In \emph{International Conference on Learning Representations},
  2019{\natexlab{b}}.

\bibitem[Neal et~al.(2004)Neal, Good, Johnston, Robertshaw, Mason, and
  Inman]{neal2004design}
David Neal, Matthew Good, Christopher Johnston, Harry Robertshaw, William
  Mason, and Daniel Inman.
\newblock Design and {W}ind-{T}unnel {A}nalysis of a {F}ully {A}daptive
  {A}ircraft {C}onfiguration.
\newblock In \emph{AIAA Structures, Structural Dynamics \& Materials
  Conference}, pp.\  1727, 2004.

\bibitem[Nguyen \& La(2019)Nguyen and La]{nguyen2019review}
Hai Nguyen and Hung La.
\newblock Review of {D}eep {R}einforcement {L}earning for {R}obot
  {M}anipulation.
\newblock In \emph{IEEE International Conference on Robotic Computing}, pp.\
  590--595, 2019.

\bibitem[Nichols et~al.(1993)Nichols, Reichert, and Rugh]{nichols1993gain}
Robert~A Nichols, Robert~T Reichert, and Wilson~J Rugh.
\newblock Gain {S}cheduling for \uppercase{H}-infinity {C}ontrollers: {A}
  {F}light {C}ontrol {E}xample.
\newblock \emph{IEEE Transactions on Control Systems Technology}, 1\penalty0
  (2):\penalty0 69--79, 1993.

\bibitem[Peng et~al.(2018)Peng, Andrychowicz, Zaremba, and Abbeel]{peng2018sim}
Xue~Bin Peng, Marcin Andrychowicz, Wojciech Zaremba, and Pieter Abbeel.
\newblock Sim-to-{R}eal {T}ransfer of {R}obotic {C}ontrol with {D}ynamics
  {R}andomization.
\newblock In \emph{IEEE International Conference on Robotics and Automation},
  pp.\  3803--3810, 2018.

\bibitem[Pinto et~al.(2017)Pinto, Davidson, Sukthankar, and
  Gupta]{pinto2017robust}
Lerrel Pinto, James Davidson, Rahul Sukthankar, and Abhinav Gupta.
\newblock Robust {A}dversarial {R}einforcement {L}earning.
\newblock In \emph{International Conference on Machine Learning}, pp.\
  2817--2826, 2017.

\bibitem[Pravitra et~al.(2020)Pravitra, Ackerman, Cao, Hovakimyan, and
  Theodorou]{pravitra2020}
Jintasit Pravitra, Kasey~A Ackerman, Chengyu Cao, Naira Hovakimyan, and
  Evangelos~A Theodorou.
\newblock $\mathcal{L}_1$-{A}daptive \uppercase{MPPI} {A}rchitecture for
  {R}obust and {A}gile {C}ontrol of {M}ultirotors.
\newblock In \emph{IEEE International Conference on Intelligent Robots and
  Systems}, pp.\  7661--7666, 2020.

\bibitem[Quinonero-Candela et~al.(2008)Quinonero-Candela, Sugiyama,
  Schwaighofer, and Lawrence]{quinonero2008dataset}
Joaquin Quinonero-Candela, Masashi Sugiyama, Anton Schwaighofer, and Neil~D
  Lawrence.
\newblock \emph{Dataset {S}hift in {M}achine {L}earning}.
\newblock Mit Press, 2008.

\bibitem[Ross et~al.(2011)Ross, Gordon, and Bagnell]{ross2011reduction}
St{\'e}phane Ross, Geoffrey Gordon, and Drew Bagnell.
\newblock A {R}eduction of {I}mitation {L}earning and {S}tructured {P}rediction
  to {N}o-{R}egret {O}nline {L}earning.
\newblock In \emph{Proceedings of the Fourteenth International Conference on
  Artificial Intelligence and Statistics}, pp.\  627--635, 2011.

\bibitem[Schulman et~al.(2015)Schulman, Levine, Abbeel, Jordan, and
  Moritz]{schulman2015trust}
John Schulman, Sergey Levine, Pieter Abbeel, Michael Jordan, and Philipp
  Moritz.
\newblock Trust {R}egion {P}olicy {O}ptimization.
\newblock In \emph{International Conference on Machine Learning}, pp.\
  1889--1897, 2015.

\bibitem[Schulman et~al.(2017)Schulman, Wolski, Dhariwal, Radford, and
  Klimov]{schulman2017proximal}
John Schulman, Filip Wolski, Prafulla Dhariwal, Alec Radford, and Oleg Klimov.
\newblock Proximal {P}olicy {O}ptimization {A}lgorithms.
\newblock \emph{arXiv preprint arXiv:1707.06347}, 2017.

\bibitem[Taylor et~al.(2019)Taylor, Dorobantu, Le, Yue, and
  Ames]{taylor2019episodic}
Andrew~J Taylor, Victor~D Dorobantu, Hoang~M Le, Yisong Yue, and Aaron~D Ames.
\newblock Episodic {L}earning with {C}ontrol \uppercase{l}yapunov {F}unctions
  for {U}ncertain {R}obotic {S}ystems.
\newblock In \emph{IEEE International Conference on Intelligent Robots and
  Systems}, pp.\  6878--6884, 2019.

\bibitem[Tobin et~al.(2017)Tobin, Fong, Ray, Schneider, Zaremba, and
  Abbeel]{tobin2017domain}
Josh Tobin, Rachel Fong, Alex Ray, Jonas Schneider, Wojciech Zaremba, and
  Pieter Abbeel.
\newblock Domain {R}andomization for {T}ransferring {D}eep {N}eural {N}etworks
  from {S}imulation to the {R}eal {W}orld.
\newblock In \emph{IEEE International Conference on Intelligent Robots and
  Systems}, pp.\  23--30, 2017.

\bibitem[Wang \& Ba(2020)Wang and Ba]{Wang2020Exploring}
Tingwu Wang and Jimmy Ba.
\newblock Exploring {M}odel-based {P}lanning with {P}olicy {N}etworks.
\newblock In \emph{International Conference on Learning Representations}, 2020.

\bibitem[Wang et~al.(2019)Wang, Bao, Clavera, Hoang, Wen, Langlois, Zhang,
  Zhang, Abbeel, and Ba]{wang2019benchmarking}
Tingwu Wang, Xuchan Bao, Ignasi Clavera, Jerrick Hoang, Yeming Wen, Eric
  Langlois, Shunshi Zhang, Guodong Zhang, Pieter Abbeel, and Jimmy Ba.
\newblock Benchmarking {M}odel-{B}ased {R}einforcement {L}earning.
\newblock \emph{arXiv preprint arXiv:1907.02057}, 2019.

\bibitem[Wang \& Hovakimyan(2011)Wang and Hovakimyan]{wang2011}
Xiaofeng Wang and Naira Hovakimyan.
\newblock $\mathcal{L}_1$ {A}daptive {C}ontroller for {N}onlinear {R}eference
  {S}ystems.
\newblock In \emph{IEEE American Control Conference}, pp.\  594--599, 2011.

\bibitem[Wang et~al.(2018)Wang, Chen, Wu, Wang, and Wang]{wang2018video}
Xin Wang, Wenhu Chen, Jiawei Wu, Yuan-Fang Wang, and William~Yang Wang.
\newblock Video {C}aptioning via {H}ierarchical {R}einforcement {L}earning.
\newblock In \emph{IEEE Conference on Computer Vision and Pattern Recognition},
  pp.\  4213--4222, 2018.

\bibitem[Wang et~al.(2022)Wang, Wang, Liang, Zhao, Huang, Xu, Dai, and
  Miao]{wang2022deep}
Xu~Wang, Sen Wang, Xingxing Liang, Dawei Zhao, Jincai Huang, Xin Xu, Bin Dai,
  and Qiguang Miao.
\newblock Deep {R}einforcement {L}earning: {A} {S}urvey.
\newblock \emph{IEEE Transactions on Neural Networks and Learning Systems},
  2022.

\bibitem[Wu et~al.(2017)Wu, Lu, Kohli, Freeman, and Tenenbaum]{wu2017learning}
Jiajun Wu, Erika Lu, Pushmeet Kohli, Bill Freeman, and Josh Tenenbaum.
\newblock Learning to {S}ee {P}hysics via {V}isual {D}e-animation.
\newblock \emph{Advances in Neural Information Processing Systems}, 30, 2017.

\bibitem[Wu et~al.(2018)Wu, Tian, Qin, Lai, and Liu]{wu2018study}
Lijun Wu, Fei Tian, Tao Qin, Jianhuang Lai, and Tie-Yan Liu.
\newblock A {S}tudy of {R}einforcement {L}earning for {N}eural {M}achine
  translation.
\newblock In \emph{Conference on Empirical Methods in Natural Language
  Processing}, pp.\  3612--3621, 2018.

\bibitem[Wu et~al.(2022)Wu, Cheng, Ackerman, Gahlawat, Lakshmanan, Zhao, and
  Hovakimyan]{wu20221}
Zhuohuan Wu, Sheng Cheng, Kasey~A Ackerman, Aditya Gahlawat, Arun Lakshmanan,
  Pan Zhao, and Naira Hovakimyan.
\newblock $\mathcal{L}_1$ {A}daptive {A}ugmentation for {G}eometric {T}racking
  {C}ontrol of {Q}uadrotors.
\newblock In \emph{International Conference on Robotics and Automation}, pp.\
  1329--1336, 2022.

\bibitem[Yu et~al.(2020)Yu, Thomas, Yu, Ermon, Zou, Levine, Finn, and
  Ma]{yu2020mopo}
Tianhe Yu, Garrett Thomas, Lantao Yu, Stefano Ermon, James~Y Zou, Sergey
  Levine, Chelsea Finn, and Tengyu Ma.
\newblock \uppercase{Mopo}: {M}odel-{B}ased {O}ffline {P}olicy {O}ptimization.
\newblock \emph{Advances in Neural Information Processing Systems},
  33:\penalty0 14129--14142, 2020.

\bibitem[Yun et~al.(2017)Yun, Choi, Yoo, Yun, and Young~Choi]{yun2017action}
Sangdoo Yun, Jongwon Choi, Youngjoon Yoo, Kimin Yun, and Jin Young~Choi.
\newblock Action-{D}ecision {N}etworks for {V}isual {T}racking with {D}eep
  {R}einforcement learning.
\newblock In \emph{IEEE Conference on Computer Vision and Pattern Recognition},
  pp.\  2711--2720, 2017.

\bibitem[Zhao et~al.(2020)Zhao, Queralta, and Westerlund]{zhao2020sim}
Wenshuai Zhao, Jorge~Pe{\~n}a Queralta, and Tomi Westerlund.
\newblock Sim-to-{R}eal {T}ransfer in {D}eep {R}einforcement {L}earning for
  {R}obotics: {A} {S}urvey.
\newblock In \emph{IEEE Symposium Series on Computational Intelligence}, pp.\
  737--744, 2020.

\bibitem[Zheng et~al.(2022)Zheng, Wang, Xu, and Huang]{zheng2022model}
Ruijie Zheng, Xiyao Wang, Huazhe Xu, and Furong Huang.
\newblock Is {M}odel {E}nsemble {N}ecessary? {M}odel-based {RL} via a {S}ingle
  {M}odel with {L}ipschitz {R}egularized {V}alue {F}unction.
\newblock In \emph{The International Conference on Learning Representations},
  2022.

\end{thebibliography}
\bibliographystyle{iclr2024_conference}
\newpage
\appendix
\section*{Appendix}
\section{Extended Description of \ellone Adaptive Control}\label{sec:ellone-extended}

In this section, we provide a detailed explanation of \ellone adaptive control.
Consider the following system dynamics:
\begin{align}\label{eq:true_dynamics}
    \Dot{x}(t) = g(x(t))+ h(x(t))u(t) +d(t,x(t), u(t)),~\quad x(0) = x_0,
\end{align}
where $x(t)\in \mathbb{R}^n$ is the system state vector, $u(t)\in \mathbb{R}^m$ is the control signal, $g:\mathbb{R}^n\rightarrow \mathbb{R}^n$ and $h:\mathbb{R}^n\rightarrow \mathbb{R}^{n\times m}$ are known functions that define the desired dynamics, both of which are locally-Lipschitz continuous functions. Furthermore, $d(t,x(t), u(t))\in \mathbb{R}^n$ represents the unknown nonlinearities and is continuous in its arguments. 
We now decompose $d(t,x(t), u(t))$ with respect to the range and kernel of $h(x(t))$ to obtain
\begin{align}\label{eq:true_dynamics_matched}
    \Dot{x}(t) = g(x(t))+ h(x(t))(u(t)+\sigma_m(t,x(t), u(t))) +h^\perp(x(t))\sigma_{um}(t,x(t), u(t)), \quad x(0)=x_0,
\end{align}
where $h(x(t))\sigma_m(t,x(t), u(t)) +h^\perp(x(t))\sigma_{um}(t,x(t), u(t))=d(t,x(t), u(t))$. Moreover, $h^\perp (x(t))\in \mathbb{R}^{n\times (n-m)}$ denotes a matrix whose columns are perpendicular to $h(x(t))\in \mathbb{R}^{n\times m}$, such that $ h(x(t))^\top h^\perp(x(t)) = 0$ for any $x(t)\in \mathbb{R}^n$. The existence of $h^\perp(x(t))$ is guaranteed if it is a full-rank matrix. The terms $\sm(t,x(t))$ and $\summ(t,x(t))$ are commonly referred to as \emph{matched} and \emph{unmatched} uncertainties, respectively. 

Consider the \emph{nominal} system in the absence of uncertainties
\begin{align*}
     \Dot{x}^{\star}(t) = g(\xs(t))+ h(\xs(t))\us(t),~\quad \xs(0) = x_0,
\end{align*}
\begin{wrapfigure}{r}{0.55\textwidth}
\begin{center}
    \centering
    \includegraphics[width=0.55\textwidth]{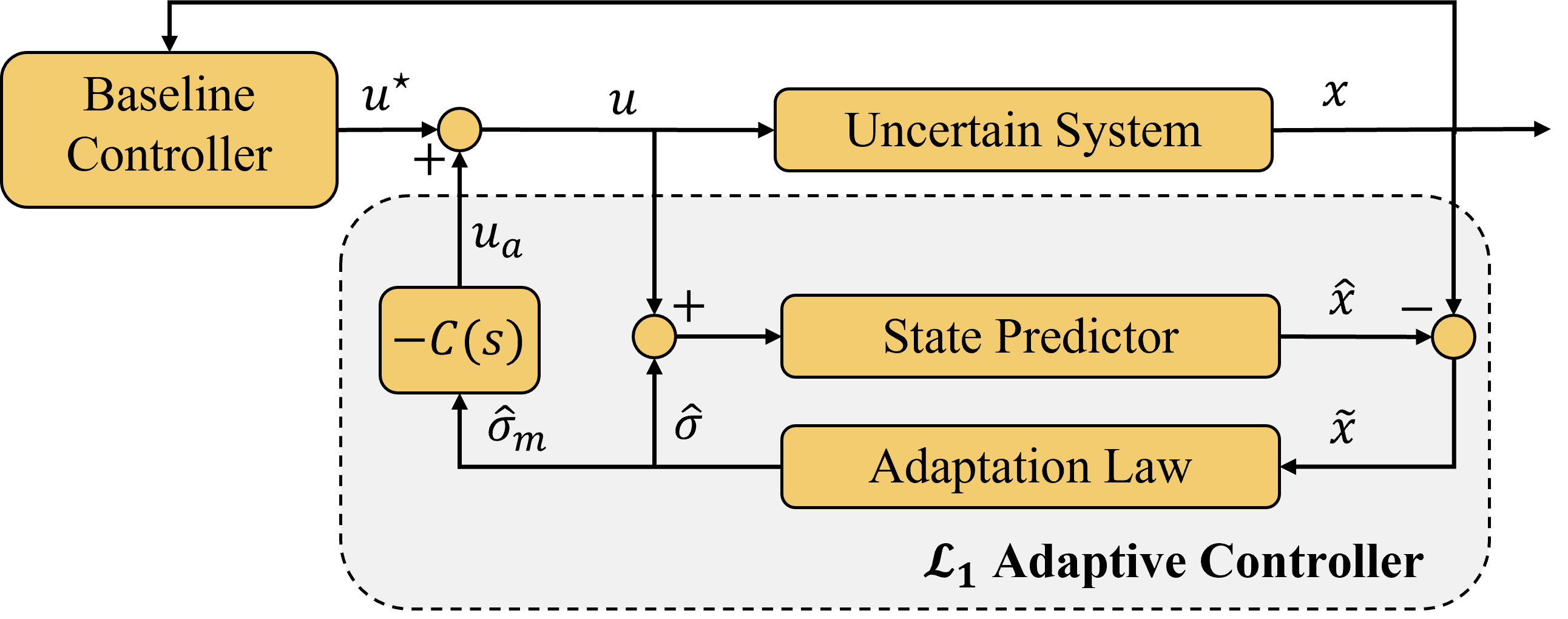}
    \caption{The architecture of  \ellone adaptive controller.}
    \label{fig:l1_architecture}
\end{center}
\vspace{-5pt}
\end{wrapfigure}
where $\us(t)$ is the baseline input designed so that the desired performance and safety requirements are satisfied. If we pass the baseline input to the true system in~\Eqref{eq:true_dynamics}, the actual state trajectory $x(t)$ can diverge from the desired state trajectory $\xs(t)$ in an unquantifiable manner due to the presence of uncertainties $d(t,x(t), u(t))$. To avoid this behavior, we employ \ellone adaptive control, which aims to compute an input $u_a(t)$ such that, when combined with the nominal input $\us(t)$, forms the \emph{augmented input}
\begin{equation}\label{eq:augmented_input}
    u(t) = \us(t) + u_a(t).
\end{equation} The objective of this approach is to ensure that the true state $x(t)$ remains uniformly and quantifiably bounded around the nominal trajectory $\xs(t)$ under certain conditions.

Next, we explain how the \ellone adaptive controller achieves this goal. The \ellone adaptive control has three components: the state predictor, the adaptation law, and a low-pass filter. The \emph{state predictor} for~\Eqref{eq:true_dynamics} is given by
\begin{equation}\label{eq:state_predictor}
    \dot{\xh}(t) = g(x(t))+h(x(t))u(t) + \hat{\sigma}(t)+ A_s \Tilde{x}(t),
\end{equation}
where $\hat{\sigma}(t) \triangleq h(x(t))\hat{\sigma}_m(t) + h^\perp(x(t))\hat{\sigma}_{um}(t)$. Moreover, $\smh(t)$ and $\summh(t)$ are the estimates of the matched and unmatched uncertainties $\sm(t,x(t))$ and $\summ(t,x(t))$, respectively. The initial conditions are given by $\hat{x}(0)= x_0$,  $\smh(0)=0$, and $\summh(0)=0$. Here, $\xh(t)\in \mathbb{R}^n$ is the state of the predictor, $u(t) \in \mathbb{R}^m$ is the augmented control input in~\Eqref{eq:augmented_input}, $\Tilde{x}(t) = \xh(t) - x(t)$ denotes the \emph{prediction error}, and $A_s \in \mathbb{S}^n$ is a chosen Hurwitz matrix. The state predictor produces the state estimate $\xh(t)$ induced by the adaptive estimates $\hat{\sigma}_m(t)$ and $\hat{\sigma}_{um}(t)$. 

Following the true dynamics in~\Eqref{eq:true_dynamics} and the state predictor in~\Eqref{eq:state_predictor}, the dynamics of the prediction error $\tilde{x}(t) = \xh(t) - x(t)$ is given by 
\begin{equation}\label{eq:prediction_error_dynamics}  
\dot{\Tilde{x}}(t) 
    = A_s \tilde{x}(t) + \left[\hat{\sigma}(t) - d(t,x(t), u(t)) \right], \quad \tilde{x}(0) = 0.
\end{equation}
Here, we refer to $\hat{\sigma}(t) - d(t,x(t), u(t))$ as the \emph{uncertainty estimation error}. Moreover, since $A_s$ is Hurwitz, the system~\Eqref{eq:prediction_error_dynamics} governing the evolution of $\tilde{x}(t)$ is exponentially stable in the absence of the exogenous input $\hat{\sigma}(t) - d(t,x(t), u(t))$. Therefore, $\tilde{x}(t)$ serves as a \emph{learning signal} for the adaptive law, which we describe next. 

We employ a \emph{piecewise constant estimation} scheme~\citep{hovakimyan2010L1} based on the solution of~\Eqref{eq:prediction_error_dynamics}, which can be expressed using the following equation: 
\begin{equation}\label{eq:prediction_error_solution}
    \Tilde{x}(t) = \exp{(A_st)}\tilde{x}(0) + \int_{0}^{t} \exp{\left(A_s(t-\tau)\right)}(\hat{\sigma}(\tau)-d(\tau,x(\tau), u(\tau))d\tau.
\end{equation}
For a given sampling time $T_s > 0$, we use a piecewise constant estimate defined as
\[
\hat{\sigma}(t) = \hat{\sigma}(iT_s), \quad t \in [iT_s,(i+1)T_s), ~i \in \mathbb{N} \cup \{ 0\} . 
\]
When the system is initialized ($i=0$), we set $\tilde{x}(t) =0$ which implies $\hat{\sigma}(t) = 0$ for $t \in [0,T_s)$. Now consider a time index $i\in \mathbb{N}$, and the time interval $[iT_s,(i+1)T_s)$. Over this time interval, the solution of~\Eqref{eq:prediction_error_solution}, obtained by applying the piecewise constant representation, can be written as
\begin{equation*}\label{eq:prediction_error_solution_2}
\begin{aligned}
    \Tilde{x}(&t) = \exp{(A_s (t - iT_s))}\tilde{x}(iT_s) \\
    &+ \int_{iT_s}^t \exp{\left(A_s(t-\tau)\right)}(\hat{\sigma}(\tau)- d(\tau,x(\tau), u(\tau))d\tau, \quad t \in [iT_s,(i+1)T_s).
\end{aligned}
\end{equation*}
Now, note that over the previous interval $t \in [(i-1)T_s,iT_s)$ the system was affected by uncertainty $d(t,x(t), u(t))$, which  resulted in $\tilde{x}(iT_s) \neq 0$. At the end of the time interval $[iT_s,(i+1)T_s)$, we obtain
\begin{align}
\Tilde{x}((i+1)T_s) =& \exp{(A_sT_s)}\tilde{x}(iT_s) + \int_{iT_s}^{(i+1)T_s} \exp{\left(A_s( (i+1)T_s -\tau)\right)}\hat{\sigma}(iT_s)d\tau +R_{(i+1)T_s} ~\notag\\
=& \exp{(A_s T_s)}\tilde{x}(iT_s) +  A_s^{-1} \left(\exp{\left(A_s T_s\right)} - \mathbb{I}_n\right)\hat{\sigma}(iT_s) +R_{(i+1)T_s} ~\label{eq:pw_solution},
\end{align} 
where
$$R_{(i+1)T_s} \triangleq - \int_{iT_s}^{(i+1)T_s} \exp{\left(A_s( (i+1)T_s -\tau)\right)}d(\tau,x(\tau), u(\tau))d\tau $$
is the residual term that captures the uncertainty entered during $[iT_s,(i+1)T_s)$.

Let
\begin{align}\label{eq:adaptation_law}
    \hat{\sigma}(iT_s) = -\Phi^{-1}(T_s)\mu(iT_s),\quad \hat{\sigma}(0)=0,
\end{align} 
where $\Phi(T_s) = A_s^{-1}(\exp{(A_sT_s)}-\mathbb{I})$ and $\mu(iT_s) = \exp{(A_sT_s)}\Tilde{x}(iT_s)$ for $i = 0,1,2, \cdots$. Substituting this into~\Eqref{eq:pw_solution} removes the first two terms, leaving us only with the residual term, which will appear as the initial condition of the next time interval. In other words, the adaptation law attempts to remove the effect of the uncertainty introduced in the current time interval by addressing it at the start of the subsequent interval. Interested readers can refer to~\cite[Ch. 2]{kharisov20131} for further details on the piecewise constant adaptive law.

Note that a small sampling time $T_s$ results in a small prediction error $\|\Tilde{x}(iT_s)\|$ for each $i=1,2,\cdots$. Therefore, it is desirable to keep $T_s$ small up to the hardware limit. However, setting a small $T_s$ and/or having large eigenvalues of $A_s$ can lead to a high adaptation gain ($\Phi^{-1}$ in~\Eqref{eq:adaptation_law}). This can result in high-frequency uncertainty estimation, which can reduce the robustness of the controlled system if we directly apply $u_a(t)=-\hat{\sigma}_m(t)$ to cancel the estimated matched uncertainty. Therefore, we use a low-pass filter in the controller to decouple the fast estimation from the control loop, allowing us to employ an arbitrarily fast adaptation while maintaining the desired robustness. To be specifc, the input $u_a(t)$ is given by
\begin{equation*}
    u_a(s) = -C(s)\mathfrak{L}\left[\hat{\sigma}_m(t) \right],
\end{equation*}
where $\mathfrak{L}[\cdot]$ denotes the Laplace transform, and the $C(s)$ is the  low-pass filter. The cutoff frequency of the low-pass filter is chosen to satisfy small-gain stability conditions, examples of which can be found in~\citep{wang2011,lakshmanan2020safe}. We refer the interested reader to~\citep{pravitra2020,wu20221,cheng2022improving} for further reading on the design process of the \ellone adaptive control.

\section{Remarks on  Assumption~\ref{assumption}}\label{appendix:assumption}
It is evident from~\Eqref{eq:affinzation} that our method relies on the continuous differentiability of $\hat{F}_\theta(x(t),u(t))$ with respect to $u(t)$ to ensure the continuity of $\hat{F}_\theta^{a}(x(t),u(t))$. Such a requirement is readily satisfied when using $\mathcal{C}^{1}$ (or higher order continuously differentiable) activation functions for $\hat{F}_\theta(x(t),u(t))$, such as \texttt{sigmoid}, \texttt{tanh}, or \texttt{swish}. For MBRL algorithms that use activation functions that are not $\mathcal{C}^1$, we can skip the switching law~(\Eqref{eq:switching}) to avoid making affine approximations at non-differentiable points (e.g., the origin for \texttt{ReLU}). 

The first part of Assumption~\ref{assumption} on the regularity of $F(x(t),u(t))$ and $W(t,x(t))$ is standard to ensure the well-posedness (uniqueness and existence of solutions) for~\Eqref{eq:ground_truth_nonlinear}~\cite[Theorem~3.1]{khalil2002nonlinear}. Furthermore, as stated above, since $\hat{F}_{\theta}(x(t),u(t))$ is $\mathcal{C}^1$ in its arguments,  its derivative $\hat{F}_{\theta}(x(t),u(t))$ is continuous and hence, satisfies the local Lipschitz continuity over the compact sets $\mathcal{X}$ and $\mathcal{U}$ trivially. 

The assumption in~\Eqref{eq:nn_model_error} is satisfied owing to the Lipschitz continuity of the function in its respective arguments. If the bound is unknown, it is possible to collect data by interacting with the environment under a specific policy and initial condition, and then compute a probabilistic bound. However, such a bound is applicable only to the chosen data set and may not hold for other choices of samples, which is the well-known issue of distribution shift~\citep{quinonero2008dataset}. This assumption, which explicitly bounds the unknown component of the dynamics, although cannot be guaranteed in the real environment, is commonly made in assessing theoretical guarantees of error propagation when using learned models, as seen in previous papers~\citep{knuth2021planning,manzano2020robust,koller2018learning}.

\section{Proof of Theorem~\ref{theorem}}\label{appendix:theoretical_analysis}

\begin{proof}[Proof of Theorem~\ref{theorem}.]
We begin by applying the triangle inequality for the estimation error:
\begin{equation}\label{eq:theorem1_setup}
\begin{aligned}
     &\|e(t,x(t),u(t))\| \\
     &= \|F(x(t),u(t))+W(t,x(t), u(t))-G_\theta(x(t))-H_\theta(x(t))u(t)-\hat{\sigma}(t)\| \\
     &= \|\hat{F}_\theta(x,u(t))+l(t,x(t),u(t))-\hat{F}^a_\theta(x(t),u(t))-\hat{\sigma}(t)\|\\
    &\leq \|\hat{F}_\theta(x(t),u(t))-\hat{F}^a_\theta(x(t),u(t))\| + \|l(t,x(t),u(t))-\hat{\sigma}(t)\| \\
    &\leq  \epsilon_a   + \|l(t,x(t),u(t))-\hat{\sigma}(t)\|, \quad i \in \{0\} \cup \mathbb{N},
\end{aligned}    
\end{equation}
where we used~\Eqref{eq:switching}. 

For the case when $i=0$, due to the initial conditions $\Tilde{x}(0)=0$, $\hat{\sigma}(0)=0$, and the assumption in~\Eqref{eq:nn_model_error}, we get that
    \begin{align*}
        \|l(t,x(t),u(t))-\hat{\sigma}(0)\| = \|l(t,x(t),u(t))\| \leq \epsilon_l, \quad \forall t \in [0,T_s).
    \end{align*} Substituting this expression into~\Eqref{eq:theorem1_setup} proves the stated result for $t \in [0,T_s)$.

Next, we bound the term $\|l(t,x(t),u(t))-\hat{\sigma}(iT_s)\|$ in~\Eqref{eq:theorem1_setup} for all $t \in [T_s,t_{\max})$. Consider an $i \in \mathbb{N}$ that corresponds to $ t \in [T_s,t_{\max})$, i.e., $i \in \left\{1,\dots, \lfloor t_{\max}/T_s  \rfloor  \right\} \triangleq \mathcal{I}$. For any such $i$, substituting the adaptation law from~\Eqref{eq:adaptation_law} into~\Eqref{eq:pw_solution} for the interval $[(i-1)T_s,iT_s)$ produces the following expression
    \begin{equation}\label{eq:xtilde_general}
        \Tilde{x}(iT_s) = -\int_{(i-1)T_s}^{iT_s} \exp(A_s(iT_s-\tau))d(\tau,x(\tau), u(\tau))d\tau, \quad \forall i \in \mathcal{I}.
    \end{equation}
    Replacing $d(\tau,x(\tau), u(\tau))= \begin{bmatrix} d_1(\tau, x(\tau), u(\tau)) & \dots & d_n(\tau, x(\tau), u(\tau))\end{bmatrix}^\top $ in \Eqref{eq:xtilde_general} and, by the definition of $A_s$ in the theorem statement, we obtain
    \begin{align*}
        \Tilde{x}(iT_s)=-\bigints_{(i-1)T_s}^{iT_s}
        \left[\begin{array}{c}
             \exp(\lambda_1(iT_s-\tau))d_1(\tau,x(\tau), u(\tau))\\
             \vdots\\
             \exp(\lambda_n(iT_s-\tau))d_n(\tau,x(\tau), u(\tau))
        \end{array}\right] d\tau,
    \end{align*}
    for all $i \in \mathcal{I}$.
    For brevity, we denote $d_j(\tau)=d_j(\tau,x(\tau),u(\tau))$ for $j=1,2,\ldots,n$.
    
    \newpage
    Since $d_j(t)$ is continuous due to Assumption~\ref{assumption} and $\exp(A_s(iT_s-\tau))$ is positive semi-definite, we invoke the Mean Value Theorem~\cite[Sec.~24.1]{mendelson2022schaum} element wise.  We conclude that there exist $t_{c_j}\in [(i-1)T_s,iT_s)$ for each $j\in \{1,\dots,n\}$ such that 
    \begin{equation}\label{eq:xtilde_vector}
        \begin{aligned}
        \Tilde{x}(iT_s) &= -\left[\begin{array}{c}
             \int_{(i-1)T_s}^{iT_s}\exp(\lambda_1(iT_s-\tau)) d_1(t_{c_1})d\tau \\
             \vdots\\             
             \int_{(i-1)T_s}^{iT_s}\exp(\lambda_n(iT_s-\tau))d_n(t_{c_n})d\tau
        \end{array}\right] \\
        &=\left[\begin{array}{c}
             \frac{1}{\lambda_1}(1-\exp(\lambda_1 T_s))d_1(t_{c_1})  \\
            \vdots\\
            \frac{1}{\lambda_n}(1-\exp(\lambda_n T_s))d_n(t_{c_n})
        \end{array}\right].
    \end{aligned}
    \end{equation}
     Substituting ~\Eqref{eq:xtilde_vector} into~\Eqref{eq:adaptation_law} gives
    \begin{equation}\label{eq:hbar_sigmahat}
        \hat{\sigma}(t)= \hat{\sigma}(iT_s) =\left[\begin{array}{c}
            \exp(\lambda_1T_s)d_1(t_{c_1})\\
            \vdots\\
            \exp(\lambda_nT_s)d_n(t_{c_n})
        \end{array}\right],
    \end{equation}
    which is the piece-wise constant estimate of the uncertainty for $t\in [iT_s,(i+1)T_s)$.
    
    Next, using the piece-wise constant adaptation law from~\Eqref{eq:hbar_sigmahat}, we bound  $\|l(t,x(t),u(t)) -\hat{\sigma}(t)\|$, for $t\in [iT_s,(i+1)T_s)$. Since $A_s$ is Hurwitz and diagonal, its diagonal elements satisfy $\lambda_j<0$, for $j=1,\ldots,n$. Thus, we have
    \begin{align}\label{eq:Thm1:Inter:1}
        \|l(t,x(t),u(t)) -\hat{\sigma}(t)\| &= \left\|l(t,x(t),u(t)) -\left[\begin{array}{c}
         \exp(\lambda_1T_s)d_1(t_{c_1}) \\
        \vdots\\
        \exp(\lambda_nT_s)d_n(t_{c_n})
        \end{array}\right]\right\| \notag \\
        &= \|l(t,x(t),u(t)) - d(t_c) + (\mathbb{I}_n-\exp{(A_sT_s)}d(t_c))\| \notag 
        \\
        &\leq \|l(t,x(t),u(t)) -d(t_c)\| + \|\mathbb{I}_n-\exp{(A_sT_s)}\|\|d(t_c)\| \notag 
        \\
        &\leq \|l(t,x(t),u(t)) -d(t_c)\|+(1-\exp(\lambda_{\min}T_s))\|d(t_c)\|,
    \end{align}
    where $d(t_c) := \begin{bmatrix} d_1(t_{c_1}) & \dots & d_n(t_{c_n})\end{bmatrix}^\top$, $d(\tau) = l(\tau, x(\tau), u(\tau)) +a(x(\tau),u(\tau))$, and $\lambda_{\min}$ denotes the eigenvalue of $A_s$ that has the minimum absolute value. Using the triangle inequality, we get $\|d(t_c)\| = \|l(t_c,x(t_c),u(t_c) + a(x(t_c),u(t_c))\| \leq \epsilon_l+\epsilon_a$. Thus,~\Eqref{eq:Thm1:Inter:1} can be written as
    \begin{align}\label{eq:Thm1:Inter:2}
        \|l(t,x(t),u(t)) -\hat{\sigma}(t)\| 
        &\leq \|l(t,x(t),u(t)) -d(t_c)\|+(1-\exp(\lambda_{\min}T_s))(\epsilon_l+\epsilon_a).
    \end{align}
    Next, we obtain the upper bound for $\|l(t,x(t),u(t)) -d(t_c)\|$ as
    \begin{align}\label{eq:Thm1:Inter:3}
        \|l(t,x(t),u(t)) -d(t_c)\| &=\|l(t,x(t),u(t)) -l(t_c,x(t_c),u(t_c)) - a(x(t_c), u(t_c))\| \notag 
        \\
        &\leq \|l(t,x(t),u(t)) - l(t_c,x(t_c),u(t_c))\| + \norm{a(x(t_c), u(t_c))} \notag 
        \\       
        &\leq \|l(t,x(t),u(t)) - l(t_c,x(t_c),u(t_c))\| +\epsilon_a,
    \end{align}
    where we used the triangle inequality and~\Eqref{eq:switching}. Due to the Assumption~\ref{assumption}, $l(t,x(t),u(t))$ is Lipschitz over the domain of its arguments. Hence, there exist positive scalars $L_{l,t}, L_{l,x}, L_{l,u}$ such that 
    \begin{equation}\label{eq:learning_error_bound}
        \|l(t,x(t),u(t)) - l(t_c,x(t_c),u(t_c))\|\leq L_{l,t}|t-t_c| + L_{l,x}\|x(t)-x(t_c)\| + L_{l,u}\|u(t)-u(t_c)\|.
    \end{equation}
    Furthermore, due to the compactness of  $\mathcal{X}$ and $\mathcal{U}$, there exist $L_{x,t},L_{u,t}$ such that the following inequalities hold:
    \begin{align*}
      \|x(t)-x(t_c)\| &\leq L_{x,t} |t-t_c|, \quad  \|u(t)-u(t_c)\| \leq L_{u,t}|t-t_c|.
    \end{align*}
    Substituting these bounds into~\Eqref{eq:learning_error_bound}, we get
    \begin{equation}\label{eq:Thm1:Inter:4}
        \|l(t,x(t),u(t)) - l(t_c,x(t_c),u(t_c))\|\leq L \left|t-t_c\right| \leq LT_s,
    \end{equation}
     where $L\triangleq L_{l,t}+L_{l,x}L_{x,t} + L_{l,u}L_{u,t} < \infty$.
     We proceed by sequentially applying the derived bounds, starting with the substitution of~\Eqref{eq:Thm1:Inter:4} into~\Eqref{eq:Thm1:Inter:3}, and then employing the resulting bound in~\Eqref{eq:Thm1:Inter:2}. The proof is then concluded by incorporating the final bound into~\Eqref{eq:theorem1_setup} and noting that
    \begin{equation*}
        (1-\exp(\lambda_{\min}T_s))(\epsilon_l+\epsilon_a) + LT_s \in \mathcal{O}(T_s).
    \end{equation*}
\end{proof}
\begin{remark}\label{rem:Thm1}
We provide some insights into the interpretation of this theorem. The theorem serves to quantify the predictive quality of the state predictor in the \ellone add-on scheme in terms of the model approximation errors $\epsilon_l$ and $\epsilon_a$, and the parameters governing the \ellone add-on scheme ($T_s$, $\lambda_{\min}$). As the control input is computed by low-pass filtering the uncertainty estimate, the performance of the \ellone augmentation is inherently tied to its predictive quality. Theorem~\ref{theorem} establishes that the error in state prediction, induced by estimated uncertainty, can be reduced down to $2\epsilon_a$ by reducing the sampling time $T_s$. In other words, we can accurately estimate the learning error $l(t, x(t), u(t))$, with the predictive accuracy being bounded only by the tunable parameter $\epsilon_a$.
\end{remark}

\section{Extended Simulation Results}\label{sec:extended_simulation}
\subsection{Experiment setup}
We provide the dimensionality of the selected environments for our simulation analysis in Table~\ref{tab:environment_dimension}. For \ellone-METRPO, the number of iterations for each environment was chosen to obtain asymptotic performance, whereas for \ellone-MBMF we fixed the number of iterations to 200K. Such a setup for MBMF is due to the unique structure of MBMF to use MBRL only to serve as the policy initialization, with which MFRL is executed until the performance reaches the asymptotic results. See Appendix~\ref{appendix:mbmf} for a detailed explanation of MBMF.


\begin{table}[h]
    \centering
    \caption{Dimensions of state and action space of the environments used in the simulations.}
    \begin{tabular}{c|c|c}
    \toprule
         Environment Name & State space dimension $(n)$ & Action space dimension $(m)$ \\
         \midrule
         Inverted Pendulum& 4&1\\
         Swimmer & 8 & 2\\
         Hopper & 11 & 3\\
         Walker & 17 & 6\\
         Halfcheetah & 17 & 6\\
    \bottomrule
    \end{tabular}
    \label{tab:environment_dimension}
\end{table}
We adopted the hyperparameters that have been reported to be effective by the baseline MBRL, which in our case are  METRPO and MBMF~\citep[Appendix B.4, B.5]{wang2019benchmarking}. Additional hyperparameters introduced by the \lmbrl scheme are the affinization threshold $\epsilon$, the cutoff frequency $\omega$, and the Hurwitz matrix $A_s$. Throughout all experiments, we fixed $A_s$ as a negative identity matrix $-\mathbb{I}_n$. For the Inverted Pendulum environment, we set $\epsilon=1$ and for Halfcheetah $\epsilon=3$, while for other environments, we chose $\epsilon=0.3$. Additionally, we selected a cutoff frequency of $\omega = 0.35/T_s$, where $T_s$ represents the sampling time interval of the environment. {\bf It is important to note that the \ellone controller has not been redesigned or retuned through all the experiments.}

    

\subsection{Technical Remarks}
If the baseline algorithm employs any data processing techniques such as input/output normalization, as discussed briefly in Section~\ref{MBRL}, our state predictor and controller~(\Eqref{eqn:predictor},\Eqref{eqn:control_input}) must also follow the corresponding process. 

\subsubsection{METRPO}
METRPO trains an ensemble model, from which fictitious samples are generated. Then, the policy network is updated following the TRPO~\citep{schulman2015trust} in the policy improvement step. The input and output of the neural network are normalized during the training step, and consequently, calculation of the Jacobian in \ellone-METRPO must \emph{unnormalize} the result. Specifically, this process is carried out by applying the chain rule, which includes multiplying the normalized Jacobian matrix ($J'$) by the standard deviations of the inputs and outputs given by~\Eqref{eq:normalization}:
\begin{equation}\label{eq:normalization}
   J= D_{\Delta x} J' D_{x,u}^{-1}  ,
\end{equation} 
where $D_{\Delta x} = \texttt{diag}\{\sigma_{\Delta x_1}, \ldots,\sigma_{\Delta x_n}\}$ and $D_{x,u} = \texttt{diag}\{\sigma_{x_1}, \ldots, \sigma_{x_n},\sigma_{u_1}, \ldots,\sigma_{u_m}\}.$

This unnormalized Jacobian ($J$)  is subsequently utilized to generate the \ellone adaptive control output.

\subsubsection{MBMF}\label{appendix:mbmf}
In the MBMF algorithm~\citep{nagabandi2018neural}, the authors begin by training a Random Shooting (RS) controller. This controller is then distilled into a neural network policy using the supervised framework DAgger~\citep{ross2011reduction}, which minimizes the KL divergence loss between the neural network policy and the RS controller. Then, the policy is fine-tuned using standard model-free algorithms like TRPO~\citep{schulman2015trust} or PPO~\citep{schulman2017proximal}. We adopt a similar approach to what was done for METRPO. The Jacobian matrix of the neural network is unnormalized based on~\Eqref{eq:normalization}. The adaptive controller is augmented to the RS controller based on the latest model trained. 

\subsection{Experiment results}\label{sec:extended-result}
In this section, we first present the results of \ellone-MBMF in comparison to MBMF without \ellone augmentation. The corresponding tabular results are summarized in Table~\ref{table:l1-mbmf}. Noticeably, \ellone augmentation improves the MBMF algorithm in \emph{every} case uniformly.
\begin{table*}[h]
  \scriptsize
  \centering
  \caption{Performance comparison between MBMF and $\mathcal{L}_1$-MBMF (Ours). The performance is averaged across multiple random seeds with a window size of 5000 timesteps at the end of the training. Higher performance is written in bold and green.}
  \label{table:l1-mbmf}
  \vspace*{-0.5em}
  \addtolength{\tabcolsep}{-2pt}
  \begin{center}
  \resizebox{\columnwidth}{!}{%
  \begin{tabular}{c|cc|cc|cc}
    \toprule
	\textbf{ } & 
	\multicolumn{2}{c|}{\bfseries \footnotesize Noise-free} & 
	\multicolumn{2}{c|}{\bfseries \footnotesize $\mathbf{\sigma_a=0.1}$} &
	\multicolumn{2}{c}{\bfseries \footnotesize $\mathbf{\sigma_o=0.1}$}\\ 
	\textbf{\footnotesize Env.} 
	
	& {\scriptsize MB-MF} & {\scriptsize $\mathcal{L}_1$-MB-MF} & {\scriptsize MB-MF} & {\scriptsize $\mathcal{L}_1$-MB-MF} & {\scriptsize MB-MF}& {\scriptsize $\mathcal{L}_1$-MB-MF}
	\\ \toprule

	{\footnotesize Inv. P.}
    & $-100.5\pm  4.3 $      &\textcolor{applegreen}{$\mathbf{-10.5 \pm 3.7}$}    &  $-7.4 \pm 1.5$     &   \textcolor{applegreen}{$\mathbf{-4.8 \pm 1.9}$}   &  $-10.2 \pm  2.4$      & \textcolor{applegreen}{$\mathbf{-5.09 \pm 1.6}$} \\ \midrule

 	{\footnotesize Swimmer}
 	& $284.9\pm  25.1$     & \textcolor{applegreen}{$\mathbf{314.3 \pm 3.3}$}    & $304.8\pm  1.9$    & \textcolor{applegreen}{$\mathbf{314.5 \pm 0.6}$}      &  $292.8\pm  1.3$  & \textcolor{applegreen}{$\mathbf{294.3 \pm 4.3}$}    \\ \midrule
 	
    {\footnotesize Hopper}
 	& $-1047.4  \pm   1098.7 $     & \textcolor{applegreen}{$\mathbf{350.1 \pm 465.2}$}     &$-877.9 \pm   383.4 $     &  \textcolor{applegreen}{$\mathbf{-285.4 \pm 65.3}$}    & $-996.9 \pm   206.0 $    & \textcolor{applegreen}{$\mathbf{-171.5  \pm 317.3 }$}  \\ \midrule
 	
    {\footnotesize Walker}
 	& $-1743.7  \pm 233.3$     & \textcolor{applegreen}{$\mathbf{-1481.7 \pm 322.9}$}      & $-2962.2 \pm 178.6$    & \textcolor{applegreen}{$\mathbf{-2447.4 \pm 329.7}$}  & $-3348.8 \pm 210.1$   & \textcolor{applegreen}{$\mathbf{-2261.4 \pm 381}$} \\ \midrule
 	
    {\scriptsize Halfcheetah}
 	& ${126.9\pm 72.7}$   &\textcolor{applegreen}{$\mathbf{304.5 \pm 56.0}$}   & $184.0 \pm 148.9$     & \textcolor{applegreen}{$\mathbf{299.8 \pm 61.0}$}   & $146.1\pm 87.8$     &\textcolor{applegreen}{$\mathbf{235.2 \pm 19.2}$} \\

    \bottomrule
	\end{tabular}}
	\end{center}
    \vspace*{-0.5em}
\end{table*}

Additionally, we provide detailed tabular values corresponding to the results shown in Fig.~\ref{fig:ablation_barplot}. Table~\ref{table:ablation} provides a summary of the scenarios where \ellone control is augmented only during either training or testing. The application of \ellone control during the testing phase clearly benefits from the explicit rejection of system uncertainty, leading to performance improvement. On the contrary, when \ellone control is applied during the training phase, it not only mitigates uncertainty along the trajectory but also implicitly affects the training process by inducing a shift in the distribution of the training dataset. This study compares these two types of impact brought about by the \ellone augmentation.

{\renewcommand{\arraystretch}{1.3}
\begin{table*}[h]
  \scriptsize
  \centering
  \caption{Comparison of \ellone augmentation effects during training and testing. \ellone-METRPO (Train) refers to the application of \ellone augmentation solely during training, whereas \ellone-METRPO (Test) indicates training without \ellone augmentation and the application of \ellone only during testing.}
  \label{table:ablation}
  \vspace*{-0.5em}
  \addtolength{\tabcolsep}{-2pt}
  \begin{center}
  \resizebox{\columnwidth}{!}{%
  \begin{tabular}{c|cc|cc|cc}
    \toprule
	\textbf{ } & 
	\multicolumn{2}{c|}{\bfseries \footnotesize Noise-free} & 
	\multicolumn{2}{c|}{\bfseries \footnotesize $\mathbf{\sigma_a=0.1}$} &
	\multicolumn{2}{c}{\bfseries \footnotesize $\mathbf{\sigma_o=0.1}$}\\ 
	\textbf{\footnotesize Env.} 
	
	& {\scriptsize $\mathcal{L}_1$-METRPO (Train)} & {\scriptsize $\mathcal{L}_1$-METRPO(Test)} & {\scriptsize $\mathcal{L}_1$-METRPO(Train)} & {\scriptsize $\mathcal{L}_1$-METRPO(Test)} & {\scriptsize $\mathcal{L}_1$-METRPO(Train)}& {\scriptsize $\mathcal{L}_1$-METRPO(Test)}
	\\ \toprule

	{\footnotesize Inv. P.}
    & \textcolor{applegreen}{$\mathbf{-8.50\pm 20.75}$}       &  $-19.36\pm 22.3$   & \textcolor{applegreen}{$\mathbf{-3.52\pm 8.08}$}      & $-49.72\pm 47.34$   & $-41.63 \pm 19.11$     & \textcolor{applegreen}{$\mathbf{-37.00\pm 39.07}$}  \\ \midrule

 	{\footnotesize Swimmer}
 	& $332.6\pm 1.3$     & $332.6\pm 1.6$     & \textcolor{applegreen}{$\mathbf{321.8\pm 1.0}$}     & $298.9\pm 3.1$    & $32.9 \pm 1.5$     & \textcolor{applegreen}{$\mathbf{52.0 \pm 8.7}$}  \\ \midrule
 	
    {\footnotesize Hopper}
 	& $1201.2  \pm 90.8$     & \textcolor{applegreen}{$\mathbf{1269.9 \pm 752.9}$}     & $771.1 \pm49.8$     & \textcolor{applegreen}{$\mathbf{818.1 \pm 394.2}$}    & \textcolor{applegreen}{$\mathbf{-931.7 \pm 15.4}$}     & $-976.8 \pm 73.1$  \\ \midrule
 	
    {\footnotesize Walker}
 	& $-7.0 \pm 0.1$     & \textcolor{applegreen}{$\mathbf{-5.9 \pm 0.0}$}     & \textcolor{applegreen}{$\mathbf{-6.5 \pm 0.3}$}     & $-7.5\pm 0.2$    & \textcolor{applegreen}{$\mathbf{-6.3 \pm 0.0}$}     & $-10.4 \pm 0.2$  \\ \midrule
 	
    {\scriptsize Halfcheetah}
 	& \textcolor{applegreen}{$\mathbf{2706.2\pm 1170.4}$}     & $1921.56\pm 821.34$     & $1834 \pm 434.87$     & \textcolor{applegreen}{$\mathbf{1957.5\pm 581.6}$}     & $987.90\pm 435.90$     & \textcolor{applegreen}{$\mathbf{1022.1\pm 619.8}$} \\

    \bottomrule
	\end{tabular}}
	\end{center}
    \vspace*{-0.5em}
\end{table*}
}

Notably, there is no consistent trend regarding whether \ellone control has a greater impact during testing or training phases. The primary conclusion drawn from this ablation study - in conjunction with Fig.~\ref{fig:ablation_barplot} - is that \ellone augmentation yields the greatest benefits when applied to both training and testing. One possible explanation for this observation is that such consistent augmentation avoids a shift in the policy distribution, leading to desired performance.

Next, in Fig.~\ref{fig:learning_curve}, we report the learning curves of the main result.
\newpage

\begin{figure}[h!]
    \centering
    \includegraphics[width = \textwidth]{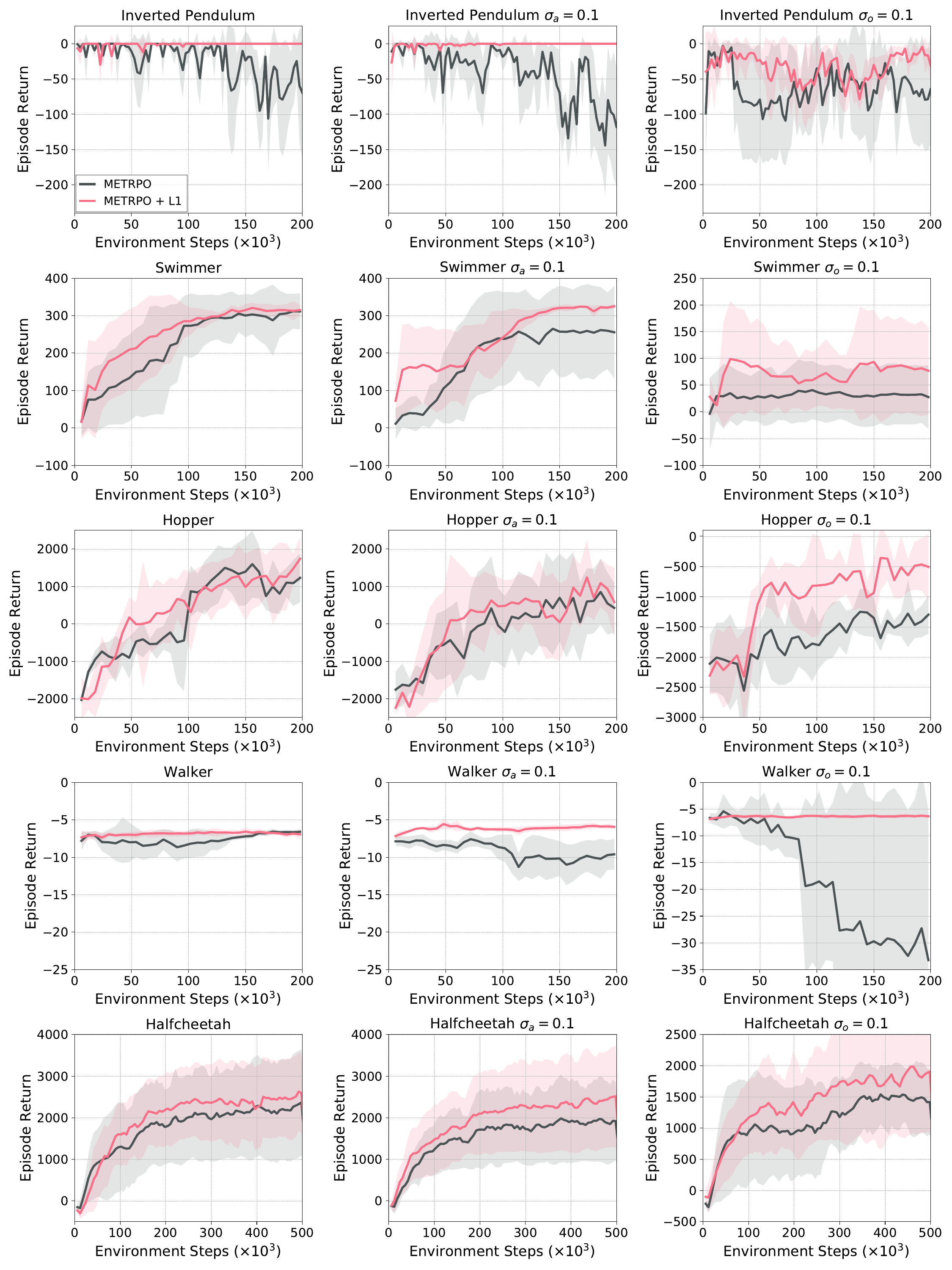}
    \caption{Plots of \ellone-METRPO learning curves as a function of episodic steps. The performance is averaged across multiple random seeds such that the solid lines indicate the average return at the corresponding timestep, and the shaded regions indicate one standard deviation.}
    \label{fig:learning_curve}
\end{figure}
\newpage

\begin{figure}[h!]
    \centering
    \includegraphics[width = \textwidth]{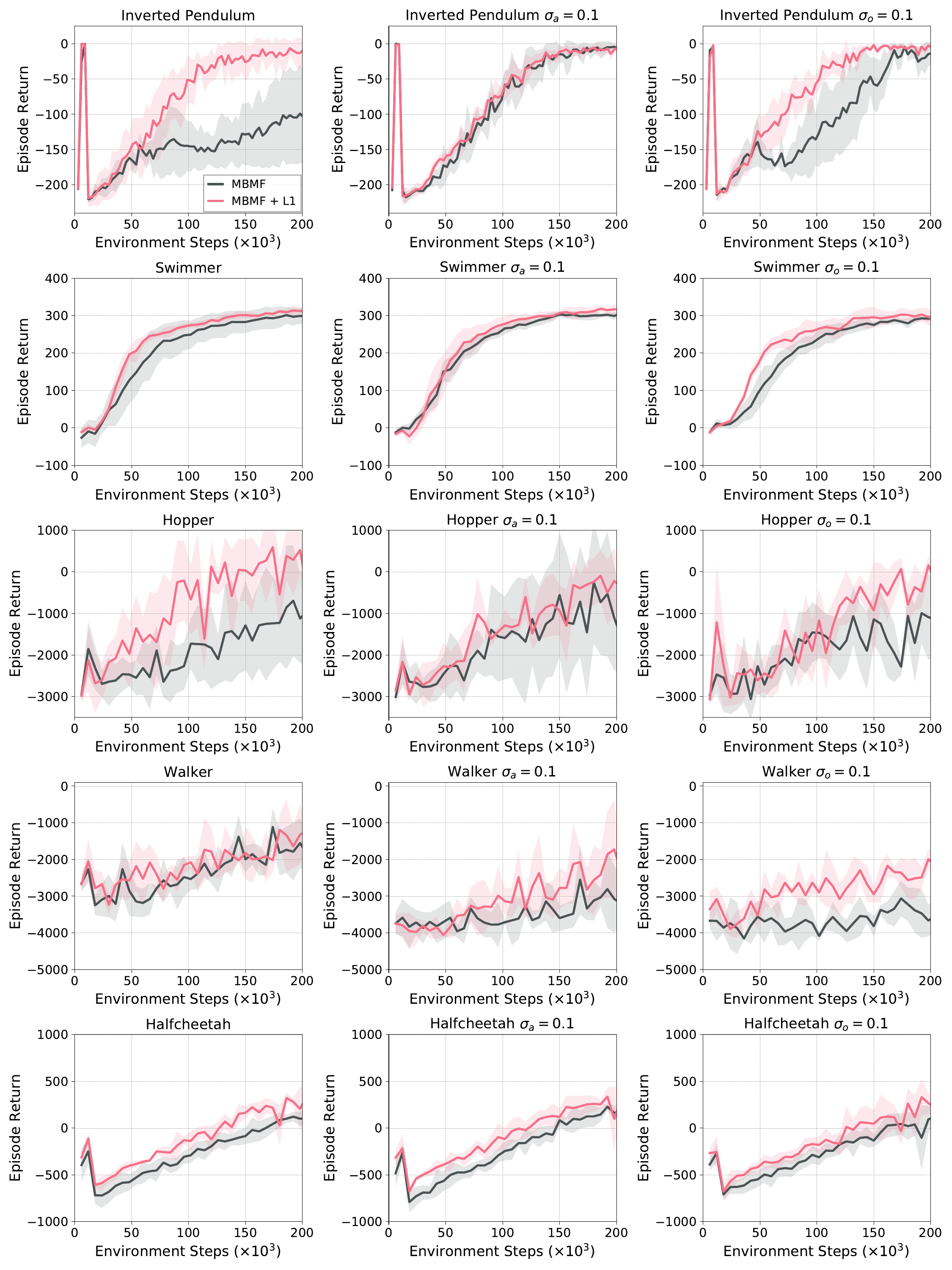}
    \caption{Plots of \ellone-MBMF learning curves as a function of episodic steps. The evaluation of the performance is identical to \ellone-METRPO. } 
    \label{fig:MBMF_learning_curve}
\end{figure}
\newpage
\subsection{Comparison with Probabilistic Models}\label{sec:prob-model}
\begin{figure}[h!]
    \centering
    \includegraphics[width = 0.5\textwidth]{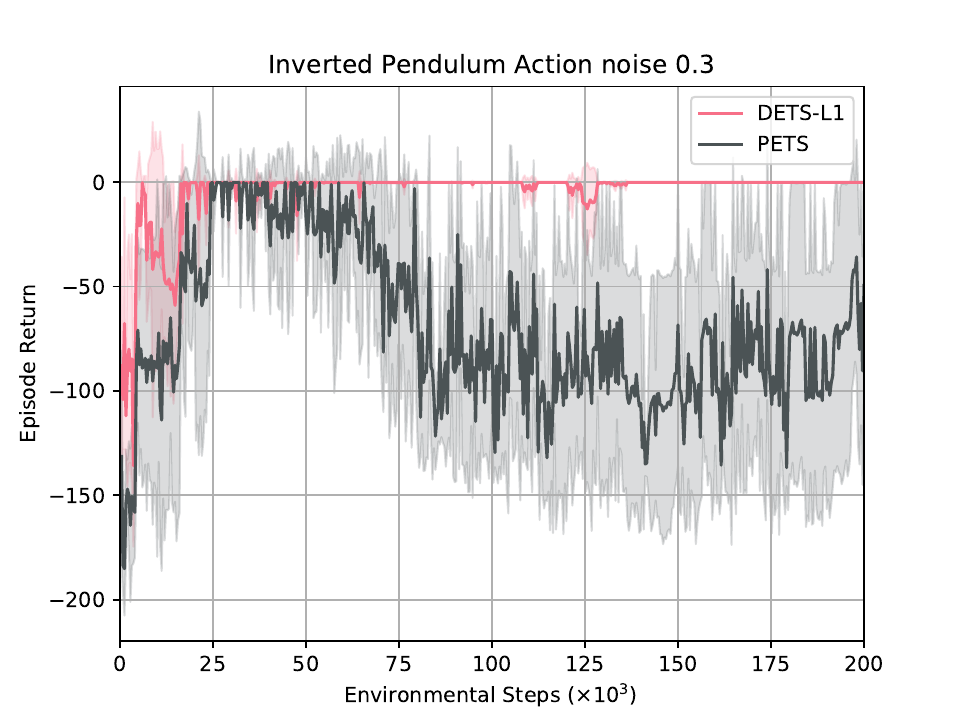}
    \caption{Plots of \ellone-DETS vs PETS learning curves as a function of episodic steps. } 
    \label{fig:dets_learning_curve}
\end{figure}

Probabilistic models, as discussed in ~\citep{chua2018deep, Wang2020Exploring}, offer a common approach in Reinforcement Learning (RL) to tackle model uncertainty. In contrast, our approach, centered on a robust controller, shares a similar spirit but differs in architecture. While previous works directly integrate uncertainty into decision-making, for example, through methods like sampling-based Model Predictive Control (MPC)~\citep{chua2018deep}, our approach takes a unique path by decoupling the process. We address uncertainty by explicitly estimating and mitigating it based on the learned deterministic nominal dynamics, allowing the MBRL algorithm to operate as intended.

Recently, the authors in~\citep{zheng2022model} emphasized that the empirical success of probabilistic dynamic model ensembles is attributed to their Lipschitz-regularizing aspect on the value functions. This observation led to the hypothesis that the ensemble's key functionality is to regularize the Lipschitz constant of the value function, not in its probabilistic formulation. The authors have shown that the predictive quality of deterministic models does not show much difference with probabilistic (ensemble) models, leading to the conclusion that deterministic models can offer computational efficiency and practicality for many MBRL scenarios. In this context, our work exploits the practical advantages of using deterministic models, while $\mathcal{L}_1$ adaptive controller accounts for the randomness present in the environment.

In Fig.~\ref{fig:dets_learning_curve}, we conducted supplementary experiments comparing PETS and its deterministic counterpart, DETS, with \ellone augmentation. The results were obtained using multiple random seeds and 200,000 timesteps in the Inverted Pendulum environment with an action noise of $\sigma_a = 0.3$, demonstrating that the deterministic model with \ellone augmentation (\ellone-DETS) can outperform the probabilistic model approach (PETS). However, it's important to note that this comparison is specific to one environment. We refrain from making broad claims regarding DETS's superiority over PETS without further in-depth analysis and experimentation. In conclusion, we express our intent to explore the development of \ellone-MBRL that can effectively work alongside probabilistic models, recognizing the potential advantages of both approaches.

\end{document}